\documentclass[12pt]{article}

%\makeatletter\renewcommand*\@fnsymbol[1]{\the#1}\makeatother
%\usepackage{natbib}
\usepackage{amssymb}
\usepackage{amsfonts}
\usepackage{amsmath}
\usepackage{amsthm}
\usepackage[english]{babel}
\usepackage{latexsym}
\usepackage{color}
\usepackage{enumerate}
\usepackage{nicefrac}
\usepackage{mathrsfs}
\usepackage{comment}
\usepackage[hmargin=2cm,vmargin=2cm]{geometry}

\usepackage{centernot}

\usepackage[affil-it]{authblk}
\setcounter{Maxaffil}{4}

\theoremstyle{plain}
\newtheorem{theorem}{Theorem}[section]
\newtheorem{lemma}[theorem]{Lemma}
\newtheorem{proposition}[theorem]{Proposition}
\newtheorem{corollary}[theorem]{Corollary}
\theoremstyle{definition}

\theoremstyle{remark}

\newtheorem{remark}[theorem]{Remark}

\numberwithin{equation}{section}

\newcommand{\abs}[1]{\lvert#1\rvert}
\newcommand{\norm}[1]{\lVert#1\rVert}

\newcommand{\bignorm}[1]{\bigl\lVert#1\bigr\rVert}
\newcommand{\Bigabs}[1]{\Bigl\lvert#1\Bigr\rvert}
\newcommand{\Bignorm}[1]{\Bigl\lVert#1\Bigr\rVert}

\newcommand{\E}{\mathbb{E}}

\newcommand{\probp}{\mathbb{P}}
\newcommand{\probq}{\mathbb{Q}}
\newcommand{\R}{\mathbb{R}}
\newcommand{\N}{\mathbb{N}}
\newcommand{\Q}{\mathbb{Q}}

\newcommand{\cF}{{\mathcal{F}}}

\newcommand{\cM}{\mathcal{M}}

\newcommand{\VaR}{\mathop {\rm VaR}\nolimits}

\newcommand{\ES}{\mathop {\rm ES}\nolimits}

\def\keywords{\vspace{.5em}
{\noindent\textbf{Keywords}:\,\relax%
}}

\def\MSCclassification{\vspace{.5em}
{\noindent\textbf{MSC\,(2000)}:\,\relax%
}}

\makeatletter
\def\@fnsymbol#1{\ensuremath{\ifcase#1\or 1\or 2\or 3\or 4\or 5\or 6\or 7\or 8\else\@ctrerr\fi}}
\makeatother

%%%%%%%%%%%%%%%%%%%%%%%%%%%%%%%%%%%%%%%%%%%%%%%%%%%%%%%%%%%%%%%%%%%%%%%%%%%%%%%%%%%%%%%%%%%%%%%%%%%%%%%%%%%%%%%%%%%%%%%%%%%%%%%%%%%%%%%%%%%%%%%%%%%%%
\begin{document}
%%%%%%%%%%%%%%%%%%%%%%%%%%%%%%%%%%%%%%%%%%%%%%%%%%%%%%%%%%%%%%%%%%%%%%%%%%%%%%%%%%%%%%%%%%%%%%%%%%%%%%%%%%%%%%%%%%%%%%%%%%%%%%%%%%%%%

\title{Fatou property, representations, and extensions of\\
law-invariant risk measures on general Orlicz spaces}

\author{
\sc{Niushan Gao}\thanks{Department of Mathematics and Computer Science, University of Lethbridge, CA (\texttt{gao.niushan@uleth.ca})}\,, 
\sc{Denny Leung}\thanks{Department of Mathematics, National University of Singapore, SG (\texttt{matlhh@nus.edu.sg})}\,, 
\sc{Cosimo Munari}\thanks{Center for Finance and Insurance, University of Zurich, CH (\texttt{cosimo.munari@bf.uzh.ch})}\,, 
\sc{Foivos Xanthos}\thanks{Department of Mathematics, Ryerson University, CA (\texttt{foivos@ryerson.ca})}
}

\date{\today}

\maketitle

\begin{abstract}\noindent
We provide a variety of results for (quasi)convex, law-invariant
functionals defined on a general Orlicz space, which extend well-known results
in the setting of bounded random variables. First, we show that Delbaen's representation of convex functionals with the Fatou property, which fails in a general
Orlicz space, can be always achieved under the assumption of law-invariance. Second,
we identify the range of Orlicz spaces where the characterization of the Fatou
property in terms of norm lower semicontinuity by Jouini, Schachermayer and
Touzi continues to hold. Third, we extend Kusuoka's representation to a general
Orlicz space. Finally, we prove a version of the extension result by
Filipovi\'{c} and Svindland by replacing norm lower semicontinuity with the
(generally non-equivalent) Fatou property. Our results have natural applications
to the theory of risk measures.
\end{abstract}

\keywords{risk measures, law-invariance, Fatou property, dual representations, conditional expectations, Orlicz spaces}

\MSCclassification{91B30, 60E05, 46E30, 46A20}

%\JELclassification{C60, G11, G22}

\parindent 0em \noindent

%%%%%%%%%%%%%%%%%%%%%%%%%%%%%%%%%%%%%%%%%%%%%%%

\section{Introduction}

The theory of risk measures is a well-established and still fruitful research
area in the growing field of mathematical finance. In essence, a risk measure
can be viewed as a rule to assign a certain indicator of risk --- typically a
capital requirement --- to a given financial position --- typically the net
capital position (assets net of liabilities) of a financial institution.
Originally articulated in the context of a finite probability space in the
landmark paper by Artzner et al.~\cite{ADEH:99}, the theory was later extended
to general probability spaces by Delbaen \cite{D:02}. In the general case, one
faces the problem of choosing a suitable model for the underlying positions. The
standard theory was developed for bounded positions and a comprehensive account
of the main results in this setting can be found in F\"{o}llmer and Schied
\cite{FS:04}. However, most realistic models in finance and insurance involve
unbounded positions and this calls for a mathematical extension beyond the
bounded setting. A possible extension to the entire set of random variables was
already discussed in Delbaen \cite{D:02} and then again in Delbaen \cite{D:09}. The extension to Lebesgue spaces is presented in Svindland \cite{S:08} and Kaina and R\"{u}schendorf \cite{KR:09},
and the more general extension to the setting of Orlicz spaces was first
investigated by Biagini and Frittelli \cite{BF:10} and Cheridito and Li
\cite{CL:09}. Further results in Orlicz spaces have been obtained by Orihuela and Ruiz Gal\'{a}n \cite{ORG:12}, Kr\"{a}tschmer et al.~\cite{KSZ:14}, Gao and Xanthos \cite{GX:16}, Gao et al.~\cite{GLX:16}, and Delbaen and Owari \cite{DO:16}. A treatment of risk
measures in the context of abstract spaces is provided in Frittelli and Rosazza
Gianin \cite{Fri:02}, Drapeau and Kupper \cite{DK:13}, and Farkas et al.~\cite{FKM:13}.

\medskip

In this paper we work in the context of a nonatomic probability triple
$(\Omega,\cF,\mathbb{P})$ and provide a variety of representation and extension
results for quasiconvex, law-invariant risk measures defined on a {\em general}
Orlicz space $L^\Phi$. In particular, we do not assume that $\Phi$ satisfies the
so-called $\Delta_2$ condition, in which case $L^\Phi$ coincides with its Orlicz
heart $H^\Phi$. The assumption of quasiconvexity is standard in the literature and
reflects the diversification principle according to which the risk of an
aggregated position should be capable of being controlled by the risk of the
individual positions. The assumption of law-invariance, which stipulates that a
risk measure depends solely on the distribution of the underlying position, is
also standard and motivated by the ubiquitous use of time series analysis in
finance and insurance practice.

\medskip

Our main contribution can be
broken down in the following results.

\medskip

\noindent
{\bf Fatou property and dual representations}. It is known since Delbaen
\cite{D:02} that for a proper convex functional
$\rho:L^\infty\to(-\infty,\infty]$ the following are equivalent:
\begin{enumerate}[(1)]
  \item $\rho$ is $\sigma(L^\infty,L^1)$ lower semicontinuous.
  \item $\rho$ has the {\em Fatou property}, i.e.
\[
\mbox{$X_n\to X$ a.s., \ $|X_n|\leq Y$ for some $Y\in L^\infty$} \ \implies \
\rho(X)\leq\liminf_{n\to\infty}\rho(X_n).
\]
\end{enumerate}
In this case, one can always represent $\rho$ in dual terms as follows:
\[
\rho(X) = \sup_{Z\in L^1}\big(\mathbb{E}[ZX]-\rho^\ast(Z)\big), \ \ \ X\in
L^\infty,
\]
where $\rho^\ast:L^1\to(-\infty,\infty]$ is defined by
\[
\rho^\ast(Z) = \sup_{X\in L^\infty}\big(\mathbb{E}[ZX]-\rho(X)\big), \ \ \ Z\in
L^1.
\]
The preceding result shows that, once the Fatou property is fulfilled, the
functional $\rho$ admits a ``nice'' dual representation where the corresponding
dual elements belong to a tractable subspace of the topological dual.
In particular, if $\rho$ is a cash-additive risk measure, the dual elements
can be identified with probability measures that are absolutely continuous with respect to
$\probp$. The appealing feature of the above result is that many risk
measures on $L^\infty$ do satisfy the Fatou property. Most notably, as
established by Jouini et al.~\cite{JST:06}, all convex cash-additive risk
measures on $L^\infty$ that are law-invariant have the Fatou property.

\smallskip

\noindent
It has been an open question since Biagini and Frittelli \cite{BF:10} and Owari \cite{O:14} whether
the above equivalence could be established in the context of a general Orlicz
space $L^\Phi$, where $L^\Phi$ plays the role of $L^\infty$ and $L^\Psi$, with
$\Psi$ being the conjugate of $\Phi$, plays the role of $L^1$. A positive result
was obtained in Delbaen and Owari \cite{DO:16} for a special class of Orlicz spaces. A definitive answer has been finally provided in Gao et al.~\cite{GLX:16}, where the authors proved that the above equivalence holds if, and only if, either the Orlicz function $\Phi$ or its conjugate $\Psi$ is $\Delta_2$.

\smallskip

\noindent
It is therefore natural to wonder whether one can still establish the same
equivalence in the context of a general Orlicz space by imposing suitable
additional assumptions on the underlying functionals. The paper contributes to
this line of research by showing that, under the assumption of {\em
law-invariance}, one can indeed prove the above equivalence without any
restriction on the reference Orlicz space. More specifically, we prove the
following result.

\begin{theorem}\label{repp}
Let $\rho:L^\Phi\rightarrow(-\infty,\infty]$ be a proper, (quasi)convex, law-invariant
functional. Then, the following statements are equivalent:
\begin{enumerate}[(1)]
\item $\rho$ is $\sigma(L^\Phi,L^\infty)$ lower semicontinuous.
\item $\rho$ is $\sigma(L^\Phi,H^\Psi)$ lower semicontinuous.
\item $\rho$ is $\sigma(L^\Phi,L^\Psi)$ lower semicontinuous.
\item $\rho$ has the Fatou property.
\end{enumerate}
\end{theorem}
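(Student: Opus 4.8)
The plan is to prove the cycle $(1)\Rightarrow(2)\Rightarrow(3)\Rightarrow(4)\Rightarrow(1)$, where the first three implications are soft and the last one carries all the weight. For $(1)\Rightarrow(2)\Rightarrow(3)$ I would simply invoke the inclusions $L^\infty\subseteq H^\Psi\subseteq L^\Psi$: enlarging the predual refines the weak topology, and lower semicontinuity with respect to a coarser topology is inherited by any finer one, so the sublevel sets $\{\rho\le c\}$, which are convex by quasiconvexity, remain closed along the chain. These steps use neither law-invariance nor any finer structure of $L^\Phi$.

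For $(3)\Rightarrow(4)$, which again holds without law-invariance, I would check that dominated a.s. convergence is stronger than $\sigma(L^\Phi,L^\Psi)$ convergence. If $X_n\to X$ a.s. with $|X_n|\le Y\in L^\Phi$, then for every $Z\in L^\Psi$ one has $|ZX_n|\le|Z|Y\in L^1$ by H\"older's inequality, so dominated convergence gives $\E[ZX_n]\to\E[ZX]$; hence $X_n\to X$ in $\sigma(L^\Phi,L^\Psi)$, and the sequential lower semicontinuity implied by the topological one yields $\rho(X)\le\liminf_n\rho(X_n)$.

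The heart of the proof is $(4)\Rightarrow(1)$, and this is the only implication where law-invariance is indispensable: by Gao et al.~\cite{GLX:16} the Fatou property and $\sigma(L^\Phi,L^\Psi)$ lower semicontinuity are genuinely inequivalent for general convex functionals once neither $\Phi$ nor $\Psi$ is $\Delta_2$, so any argument must exploit the nonatomic structure through law-invariance. The engine I would build is a conditional-expectation contraction: fixing an increasing sequence of finite sub-$\sigma$-algebras $\cG_n\uparrow\cF$ that generate $\cF$, I would show $\rho(\E[X\mid\cG_n])\le\rho(X)$. For this, write $\E[X\mid\cG_n]$ as an a.s. limit of Ces\`aro averages $\tfrac1N\sum_i T_iX$ of measure-preserving rearrangements of $X$ (available because the space is nonatomic); quasiconvexity bounds $\rho$ of each average by the maximum of the values $\rho(T_iX)=\rho(X)$, and the Fatou property passes the bound to the limit. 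The delicate point, and the step I expect to be the main obstacle, is that the Fatou property demands a single dominating function in $L^\Phi$, which the averages need not admit in a non-$\Delta_2$ space; I would first run the argument for bounded $X$, where the averages are dominated by $\|X\|_\infty$, and then remove boundedness by truncating $X$ and invoking the Fatou property once more.

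With the contraction in hand, the sublevel set $C=\{\rho\le c\}$ is convex, law-invariant, and stable under each $\E[\,\cdot\mid\cG_n]$, while the restriction of $\rho$ to $L^\infty$ is law-invariant, quasiconvex and has the Fatou property, so its sublevel sets are $\sigma(L^\infty,L^1)$ closed by Delbaen's characterization \cite{D:02}. To lift this to $\sigma(L^\Phi,L^\infty)$ lower semicontinuity on all of $L^\Phi$, I would separate a point $X_0\notin C$ from $C$: approximating $X_0$ by the bounded simple functions $\E[X_0\mid\cG_n]\to X_0$ (a.s. and dominated), I would produce an $L^\infty$ functional strictly separating $X_0$ from $C\cap L^\infty$ at the bounded level and then pass to the limit, the conditional-expectation stability of $C$ guaranteeing that the separating functional survives. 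Equivalently, one may reduce to a separable predual $H^\Psi$ (so that $L^\Phi=(H^\Psi)^\ast$) and combine the Krein--Smulian theorem with Koml\'os' theorem, testing closedness of $C$ on norm-bounded sequences through a.s.-convergent convex combinations, which again lie in $C$ by convexity and the Fatou property.
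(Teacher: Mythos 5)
Your reduction to sublevel sets, the three soft implications, and the identification of $(4)\Rightarrow(1)$ as the only hard step all match the paper, and your proposed engine --- stability under conditional expectations on finite $\sigma$-algebras, proved for bounded $X$ by averaging measure-preserving rearrangements and invoking quasiconvexity --- is essentially the paper's own route (Lemma~\ref{bounded}, Proposition~\ref{con-con-e}, Corollary~\ref{cha-order}). The gap is exactly in the step you flag as the main obstacle: passing from bounded to general $X\in L^\Phi$ ``by truncating $X$ and invoking the Fatou property once more''. This cannot work as stated. Writing $X^{(k)}$ for the truncations, the bounded case plus Fatou gives $\rho(\E[X|\cG_n])\leq\liminf_k\rho(\E[X^{(k)}|\cG_n])\leq\liminf_k\rho(X^{(k)})$, and to conclude you would need $\liminf_k\rho(X^{(k)})\leq\rho(X)$; but the Fatou property is a \emph{lower} semicontinuity property and delivers precisely the opposite inequality $\rho(X)\leq\liminf_k\rho(X^{(k)})$, while nothing in the hypotheses (quasiconvexity, law-invariance, Fatou) gives upper semicontinuity along truncations. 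Nor can you repair this by norm approximation: when $X\in L^\Phi\setminus H^\Phi$ one has $\norm{X1_{\{\abs{X}>k\}}}_\Phi\not\to0$, which is exactly the point of the remark following Proposition~\ref{con-con-e}. The paper's Lemma~\ref{critical} exists precisely to replace truncation: using nonatomicity and law-invariance it averages $N$ rearranged copies of $X$ into a $Y$ with $\E[Y|\pi]=\E[X|\pi]$ that splits into a bounded part plus a part of arbitrarily small $\norm{\cdot}_\Phi$-norm. Without this device, or a substitute for it, your contraction never reaches unbounded $X$ and the whole implication collapses.

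The lifting step at the end has two further gaps. First, you assert that $\E[X_0|\cG_n]\to X_0$ ``a.s.\ and dominated'' along a fixed filtration of finite $\sigma$-algebras. Domination means $\sup_n\E[\abs{X_0}\,|\,\cG_n]\in L^\Phi$, i.e.\ a Doob maximal inequality in $L^\Phi$, which is unavailable in a general Orlicz space; compare Proposition~\ref{not-o} (the net of conditional expectations order converges only when $X$ is bounded) and Proposition~\ref{ce-con}, where the sequence of partitions achieving order convergence has to be constructed adaptively from the tail behaviour of $X$ and cannot be fixed in advance. Second, your Krein--Smulian/Koml\'os alternative makes no use of law-invariance at all, so it cannot possibly close the argument: as you yourself note, \cite{GLX:16} shows that when neither $\Phi$ nor $\Psi$ is $\Delta_2$ there exist convex order-closed sets that are not $\sigma(L^\Phi,L^\Psi)$-closed, hence not $\sigma(L^\Phi,H^\Psi)$-closed either. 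The precise failure point is again domination: Koml\'os produces Ces\`aro means converging a.s.\ and norm bounded, but an a.s.\ convergent norm-bounded sequence need not be order convergent, so the Fatou property cannot be invoked to keep the limit in $C$ (and on a general nonatomic space $H^\Psi$ need not be separable, so the metrizability of balls is also unavailable). The paper's Corollary~\ref{rep} circumvents all of this with the conditional-expectation machinery: for a $\sigma(L^\Phi,L^\infty)$-convergent net in $C$, each $\E[X_\alpha|\pi]$ lies in $C$ by Corollary~\ref{cha-order}, conditional expectation with respect to a finite partition maps the net to a weakly convergent one, the norm-closed convex set $C$ is weakly closed, hence $\E[X|\pi]\in C$ for every $\pi$, and Corollary~\ref{cha-order} (via Proposition~\ref{ce-con}) then puts $X$ itself in $C$.
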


\medskip

A far-reaching result by Jouini et al.~\cite{JST:06} established that, for a
proper convex functional $\rho:L^\infty\to(-\infty,\infty]$ that is additionally
assumed to be law-invariant, the Fatou property is automatically implied by the (generally weaker) property of norm lower semicontinuity. The result was obtained in the context of a standard nonatomic probability space
and was later extended to arbitrary nonatomic probability spaces in Svindland
\cite{S:10}. Since every cash-additive risk measure on $L^\infty$ is norm
continuous, it follows that a convex cash-additive risk measure on $L^\infty$
has the Fatou property whenever it is law-invariant.

\smallskip

\noindent
We extend the result in Jouini et al.~\cite{JST:06} by characterizing the range
of Orlicz spaces $L^\Phi$ where the above implication remains true for every
proper, convex, law-invariant functional. In particular, we show that norm lower
semicontinuity no longer automatically implies the Fatou property unless $\Phi$
satisfies the $\Delta_2$ condition. This is the content of our second main
result.

\begin{theorem}\label{cha}
The following statements are equivalent:
\begin{enumerate}[(1)]
\item Any proper, (quasi)convex, law-invariant functional
$\rho:L^\Phi\to(-\infty,\infty]$ that is norm lower semicontinuous has the Fatou
property.
\item $\Phi$ is $\Delta_2$.
\end{enumerate}
\end{theorem}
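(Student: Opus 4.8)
The plan is to prove both implications, relying on the standard fact from Orlicz space theory that $\Phi$ is $\Delta_2$ if and only if $L^\Phi=H^\Phi$, equivalently if and only if the Luxemburg norm on $L^\Phi$ is order continuous. The implication $(2)\Rightarrow(1)$ is the short analytic direction, whereas $(1)\Rightarrow(2)$ I would establish by contraposition through an explicit counterexample. It is worth stressing at the outset that convexity (as opposed to mere quasiconvexity) and law-invariance will cost nothing here: the functional I have in mind for the counterexample is genuinely convex and law-invariant, so it settles both readings of the statement at once, while the positive direction will require no structural hypothesis on $\rho$ beyond norm lower semicontinuity.

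For $(2)\Rightarrow(1)$, assume $\Phi$ is $\Delta_2$, so that the norm on $L^\Phi=H^\Phi$ is order continuous. The key consequence is a dominated convergence principle: whenever $X_n\to X$ a.s. and $|X_n|\le Y$ for some $Y\in L^\Phi$, one has $|X|\le Y$ a.s., hence $|X_n-X|\le 2Y\in L^\Phi$ with $|X_n-X|\to 0$ a.s., and order continuity of the norm then yields $\|X_n-X\|_\Phi\to 0$. Granting this, if $\rho$ is norm lower semicontinuous then $\rho(X)\le\liminf_n\rho(X_n)$, which is precisely the Fatou property. Note that this direction uses neither convexity nor law-invariance; order continuity of the norm alone forces every norm lower semicontinuous functional to have the Fatou property.

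For $(1)\Rightarrow(2)$, I argue the contrapositive: assuming $\Phi$ is not $\Delta_2$, I exhibit a proper, convex, law-invariant, norm lower semicontinuous functional failing the Fatou property. Take $\rho$ to be the convex indicator of the Orlicz heart,
\[
\rho(X)=\begin{cases}0,&X\in H^\Phi,\\ +\infty,&X\notin H^\Phi.\end{cases}
\]
Since $H^\Phi$ is a linear subspace, $\rho$ is convex, and it is proper because $0\in H^\Phi$. Membership of $X$ in $H^\Phi$ depends only on the law of $X$, through the finiteness of $\E[\Phi(\lambda|X|)]$ for every $\lambda>0$, so $\rho$ is law-invariant; and as $H^\Phi$ is norm closed, $\rho$ is norm lower semicontinuous. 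To see that $\rho$ violates the Fatou property is exactly where the failure of $\Delta_2$ enters: then $H^\Phi\subsetneq L^\Phi$, so we may pick $Y\in L^\Phi_+\setminus H^\Phi$. Setting $X_n:=Y\wedge n\in L^\infty\subseteq H^\Phi$, we have $X_n\to Y$ a.s. with $0\le X_n\le Y\in L^\Phi$, yet $\rho(X_n)=0$ for all $n$ while $\rho(Y)=+\infty$. Thus $\rho(Y)>\liminf_n\rho(X_n)$, so the Fatou property fails, completing the contrapositive.

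I expect no serious obstacle in the execution; the only genuinely creative step is recognizing that the heart $H^\Phi$ is the natural witness, namely a norm closed, convex, law-invariant set that fails to be closed under dominated a.s.\ convergence precisely when $\Phi$ is not $\Delta_2$. Once this is in hand, all remaining verifications --- the equivalence of $\Delta_2$ with order continuity of the norm and with $L^\Phi=H^\Phi$, the norm closedness of $H^\Phi$, and the law-invariance of heart membership --- are standard facts about Orlicz spaces.
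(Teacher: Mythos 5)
Your proof is correct, and while the direction $(2)\Rightarrow(1)$ matches the paper's argument (under $\Delta_2$ the norm is order continuous, so dominated a.s.\ convergence implies norm convergence, hence norm lower semicontinuity forces the Fatou property --- neither proof needs convexity or law-invariance here), your direction $(1)\Rightarrow(2)$ uses a genuinely different and simpler counterexample. You take the convex indicator of the Orlicz heart $H^\Phi$, which is proper, convex, law-invariant (membership in $H^\Phi$ depends only on the law through $\E[\Phi(\lambda|X|)]$), and norm lower semicontinuous because $H^\Phi$ is norm closed; it fails Fatou by truncating any $Y\in L^\Phi_+\setminus H^\Phi$ (such $Y$ exists since $X\notin H^\Phi$ implies $|X|\notin H^\Phi$). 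The paper instead constructs the more elaborate set $C=\{X\in L^\Phi : X^-\in H^\Phi,\ \E[X]\geq 0\}$ and the induced functional $\rho(X)=\inf\{m\in\R : X+m1_\Omega\in C\}$, proving $C$ is norm closed but not order closed. The reason for this extra work is that the paper's $\rho$ is a \emph{coherent, cash-additive, monotone} risk measure, which both strengthens the message (even genuinely ``nice'' law-invariant risk measures can fail the Fatou property when $\Phi$ is not $\Delta_2$) and is reused later in the paper's final remark to show that the Kusuoka representation (Theorem 3) and the extension result (Theorem 4) break down under mere norm lower semicontinuity --- a role your indicator functional cannot play, since it is neither monotone nor cash-additive. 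For Theorem 2 as stated, however, your example fully suffices, and since it is convex it settles both the convex and quasiconvex readings of statement (1) at once.
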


\smallskip

In addition to the above results, we extend to Orlicz spaces the representation
for law-invariant risk measures obtained by Kusuoka \cite{K:01} in the
coherent case and generalized by Frittelli and Rosazza Gianin \cite{FR:05} to
the convex case; see also Shapiro \cite{Sh:13} and Belomestny and Kr\"{a}tschmer \cite{BK:12,BK:17}. Here, we denote by $\ES_\alpha(X)$ the Expected Shortfall of a
random variable $X$ at the level $\alpha\in(0,1]$. Moreover, we denote by
$\mathcal{P}((0,1])$ the set of all probability measures over $(0,1]$.

\begin{theorem}\label{KT}
Let $\rho:L^\Phi\rightarrow(-\infty,\infty]$ be a convex, law-invariant,
cash-additive risk measure with the Fatou property. Then, there exists a proper convex functional
$\gamma:\mathcal{P}((0,1])\rightarrow(-\infty,\infty]$ such that
\[
\rho(X) =
\sup_{\mu\in\mathcal{P}((0,1])}\Big(\int_{(0,1]}
\ES_\alpha(X)\,\mathrm{d}\mu(\alpha)-\gamma(\mu)\Big), \ \ \ X\in L^\Phi.
\]
\end{theorem}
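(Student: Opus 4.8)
The plan is to combine the $\sigma(L^\Phi,L^\Psi)$-dual representation furnished by Theorem~\ref{repp} with a rearrangement argument and a Choquet-type integral representation of quantile functions. First I would invoke Theorem~\ref{repp}: since $\rho$ has the Fatou property it is $\sigma(L^\Phi,L^\Psi)$ lower semicontinuous, and being proper and convex the Fenchel--Moreau theorem on the dual pair $(L^\Phi,L^\Psi)$ gives $\rho(X)=\sup_{Z\in L^\Psi}\big(\mathbb{E}[ZX]-\rho^\ast(Z)\big)$, where $\rho^\ast(Z)=\sup_{X\in L^\Phi}\big(\mathbb{E}[ZX]-\rho(X)\big)$. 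Monotonicity and cash-additivity of the risk measure $\rho$ then confine the effective domain of $\rho^\ast$ to those $Z\le 0$ with $\mathbb{E}[Z]=-1$; writing $D=-Z$, these are exactly the densities $\mathrm{d}\probq/\mathrm{d}\probp\in L^\Psi$ of probability measures $\probq\ll\probp$, so that $\rho(X)=\sup_{D}\big(\mathbb{E}[D(-X)]-\rho^\ast(-D)\big)$.

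Next I would exploit law-invariance. Because $\rho$ is law-invariant, so is $\rho^\ast$, i.e.\ $\rho^\ast(-D)$ depends on $D$ only through its law; this is one of the points where one must work a little in a general (not necessarily standard) nonatomic space. Since $(\Omega,\cF,\probp)$ is nonatomic, the admissible densities are closed under equidistribution, and for each fixed $X$ the Hardy--Littlewood inequality gives $\sup_{D'\sim D}\mathbb{E}[D'(-X)]=\int_0^1 q_{D}(t)\,q_{-X}(t)\,\mathrm{d}t$, the bound being attained by a comonotone coupling that exists precisely because the space is nonatomic. Replacing each $D$ by its worst rearrangement therefore does not change the supremum, and I obtain $\rho(X)=\sup_{g}\big(\int_0^1 g(t)\,q_{-X}(t)\,\mathrm{d}t-c(g)\big)$, the supremum running over the increasing quantile functions $g=q_D$ of admissible densities, with $c(g)$ the well-defined, law-dependent value of the penalty.

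The final step is the integral representation. The extreme points of the convex set of increasing functions $g:(0,1)\to[0,\infty)$ with $\int_0^1 g=1$ are the functions $g_\alpha=\tfrac1\alpha\mathbf{1}_{(1-\alpha,1]}$, $\alpha\in(0,1]$, and each such $g$ admits a representation $g=\int_{(0,1]}g_\alpha\,\mathrm{d}\mu(\alpha)$ for some $\mu\in\mathcal{P}((0,1])$. With the usual convention $\ES_\alpha(X)=\tfrac1\alpha\int_{1-\alpha}^1 q_{-X}(t)\,\mathrm{d}t$, Fubini's theorem turns $\int_0^1 g\,q_{-X}$ into $\int_{(0,1]}\ES_\alpha(X)\,\mathrm{d}\mu(\alpha)$. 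Transporting the penalty along the affine correspondence $\mu\mapsto g_\mu$ and setting $\gamma$ equal to $+\infty$ off the resulting domain --- equivalently, defining $\gamma(\mu)=\sup_{X\in L^\Phi}\big(\int_{(0,1]}\ES_\alpha(X)\,\mathrm{d}\mu(\alpha)-\rho(X)\big)$ --- produces a functional on $\mathcal{P}((0,1])$ that is convex (a supremum of $\mu$-affine maps, or the composition of the convex $\rho^\ast$ with an affine map) and proper (inherited from properness of $\rho^\ast$), and yields the asserted representation.

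I expect the main obstacle to lie in the passage to the unbounded regime, which is exactly what distinguishes a general Orlicz space from $L^\infty$. Concretely, I must ensure that $\ES_\alpha(X)<\infty$ for every $X\in L^\Phi$ (using the embedding $L^\Phi\hookrightarrow L^1$), that the Hardy--Littlewood pairing $\int_0^1 q_D\,q_{-X}$ is finite and attained for the possibly unbounded densities $D\in L^\Psi$, and that the interchange of integral and supremum in the Choquet step is licit --- this requires $\int_{(0,1]}\ES_\alpha(X)\,\mathrm{d}\mu(\alpha)<\infty$ whenever $\gamma(\mu)<\infty$, so that Fubini applies and no mass of $\mu$ accumulating near $\alpha=0$ spoils integrability. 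The transfer of law-invariance from $\rho$ to $\rho^\ast$ in an arbitrary nonatomic space is the other point that needs care, since the usual measure-preserving-automorphism argument is cleanest on standard spaces.
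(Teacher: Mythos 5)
Your proposal is correct in substance, but it takes a genuinely different route from the paper. The paper never touches the dual representation at all: it restricts $\rho$ to $L^\infty$ (Lemma~\ref{rest} guarantees the restriction is proper and keeps the Fatou property), invokes the classical Kusuoka-type representation on $L^\infty$ from \cite[Theorem~4.62]{FS:04} to obtain $\gamma$, defines $\rho'$ on all of $L^\Phi$ by the same supremum formula, observes that $\rho'$ is convex, law-invariant, cash-additive and has the Fatou property, and concludes $\rho=\rho'$ from the localization Lemma~\ref{uniq} (two proper, quasiconvex, law-invariant functionals with the Fatou property that agree on $L^\infty$ agree on $L^\Phi$ --- proved via the conditional-expectation machinery of Proposition~\ref{ce-con} and Corollary~\ref{cha-order}). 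You instead redo the classical duality proof directly on $L^\Phi$: Theorem~\ref{repp} plus Fenchel--Moreau on the pairing $(L^\Phi,L^\Psi)$, reduction of the dual variables to densities via monotonicity and cash-additivity, law-invariance of $\rho^\ast$ via comonotone couplings and Hardy--Littlewood (which, as you note, works on an arbitrary nonatomic space by choosing a uniform $U$ with $D=q_D(U)$ a.s., so no standard-space or automorphism argument is needed), and the mixture representation $g=\int_{(0,1]}g_\alpha\,\mathrm{d}\mu(\alpha)$ of increasing normalized densities. The steps you flag as obstacles do all go through: the Hardy--Littlewood pairing of $D\in L^\Psi$ and $X\in L^\Phi$ is finite by the Orlicz--H\"{o}lder inequality applied to a comonotone rearrangement, and this finiteness also licenses the Fubini interchange; $\ES_\alpha$ is finite on $L^\Phi\subset L^1$; and passing to the minimal penalty $\gamma(\mu)=\sup_{X\in L^\Phi}\big(\int_{(0,1]}\ES_\alpha(X)\,\mathrm{d}\mu(\alpha)-\rho(X)\big)$ is the standard duality argument, with properness inherited from properness of $\rho^\ast$. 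What your route buys is a self-contained proof (modulo Theorem~\ref{repp}) together with an explicit identification of the penalty, namely $\gamma(\mu)=\rho^\ast(-D)$ for any density $D$ whose quantile function is $\int_{(0,1]} g_\alpha\,\mathrm{d}\mu(\alpha)$; what the paper's route buys is brevity and robustness --- by quoting the known $L^\infty$ result and appealing to Lemma~\ref{uniq}, it sidesteps exactly the rearrangement, attainment, and integrability verifications in the unbounded regime that your plan must carry out by hand.
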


\smallskip

\noindent
We also show that the above Kusuoka representation fails if the Fatou property
is replaced by the weaker property of norm lower semicontinuity.

\medskip

\noindent
{\bf Extensions}. In Filipovi\'{c} and Svindland \cite{FS:12}, it was shown that
every proper, convex, law-invariant, norm lower semicontinuous functional
$\rho:L^\infty\to(-\infty,\infty]$ can be uniquely extended to a convex, law-invariant
functional on $L^p$, $1\leq p<\infty$, that is also norm lower semicontinuous. This extension result played a fundamental role in the study of robustness properties of risk measures as discussed in Kr\"{a}tschmer et al.~\cite{KSZ:14}; see also Koch-Medina and Munari \cite{KM:14}. We show that a similar extension result still holds in the context of a general Orlicz space if one replaces norm lower semicontinuity by the Fatou property.

\begin{theorem}\label{ext}
Any proper, convex, law-invariant functional
$\rho:L^\infty\rightarrow(-\infty,\infty]$ with the Fatou property (or equivalently norm lower semicontinuous) admits a unique proper, convex, law-invariant extension to $L^\Phi$ with the Fatou
property.
\end{theorem}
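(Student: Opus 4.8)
The plan is to produce the extension explicitly through a dual formula and then to pin it down by showing it is simultaneously the largest and the smallest Fatou extension. Throughout, $\Psi$ denotes the conjugate of $\Phi$, and I use that $L^\infty\subseteq L^\Psi$ and that $(L^\Phi,L^\Psi)$ is a dual pair under $\langle X,Z\rangle=\E[ZX]$ (Orlicz--H\"older). Since $\rho$ is proper, convex and has the Fatou property on $L^\infty$, Delbaen's theorem recalled in the introduction gives $\rho(X)=\sup_{Z\in L^1}(\E[ZX]-\rho^\ast(Z))$ with $\rho^\ast(Z)=\sup_{W\in L^\infty}(\E[ZW]-\rho(W))$, and $\rho^\ast$ is a proper, convex, law-invariant functional on $L^1$. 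I would define the candidate extension by
\[
\bar\rho(X):=\sup_{Z\in L^\Psi}\bigl(\E[ZX]-\rho^\ast(Z)\bigr),\qquad X\in L^\Phi.
\]
As a supremum of $\sigma(L^\Phi,L^\Psi)$-continuous affine functionals, $\bar\rho$ is convex and $\sigma(L^\Phi,L^\Psi)$ lower semicontinuous; law-invariance of $\bar\rho$ follows from that of $\rho^\ast$ via the Hardy--Littlewood inequality (replacing each $Z$ by a comonotone rearrangement shows the supremum depends on $X$ only through its quantile function). Granting law-invariance, Theorem~\ref{repp} upgrades $\sigma(L^\Phi,L^\Psi)$ lower semicontinuity to the Fatou property, and properness is inherited from that of $\rho$ and $\rho^\ast$.

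The first substantive step is to check that $\bar\rho$ extends $\rho$, i.e.\ $\bar\rho=\rho$ on $L^\infty$. The inequality $\bar\rho\le\rho$ there is immediate, since restricting the dual variable from $L^1$ to $L^\Psi$ can only decrease the supremum and $\rho=\rho^{\ast\ast}$ on $L^\infty$. For the reverse inequality I must recover the full value of $\rho$ using only dual elements in $L^\Psi$. Here I would use the dilation (conditional Jensen) inequality for law-invariant convex functionals: for an increasing sequence of finite sub-$\sigma$-algebras $\mathcal{G}_n$ with $\bigvee_n\mathcal{G}_n\supseteq\sigma(Z)$ and atoms of rational measure, the conditional expectation $\E[\,\cdot\,|\,\mathcal{G}_n]$ is an average of measure-preserving maps, so $\rho^\ast(\E[Z\,|\,\mathcal{G}_n])\le\rho^\ast(Z)$. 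Setting $Z_n:=\E[Z\,|\,\mathcal{G}_n]\in L^\infty\subseteq L^\Psi$ and using $\E[Z_nX]\to\E[ZX]$ for $X\in L^\infty$ (trivially, as $X$ is bounded and $Z_n\to Z$ in $L^1$), one gets $\E[Z_nX]-\rho^\ast(Z_n)\to\E[ZX]-\rho^\ast(Z)$ along a bounded sequence of dual elements, whence $\bar\rho(X)\ge\sup_{Z\in L^1}(\E[ZX]-\rho^\ast(Z))=\rho(X)$.

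It remains to prove uniqueness, which I expect to be the main obstacle. Let $\bar\rho_1,\bar\rho_2$ be two proper, convex, law-invariant Fatou extensions of $\rho$. By Theorem~\ref{repp} each $\bar\rho_j$ is $\sigma(L^\Phi,L^\Psi)$ lower semicontinuous, hence equals its biconjugate relative to the pairing $(L^\Phi,L^\Psi)$; so it suffices to show that the conjugate $\bar\rho_j^\ast$ agrees with $\rho^\ast$ on $L^\Psi$. Because $\bar\rho_j=\rho$ on $L^\infty$, the inequality $\bar\rho_j^\ast\ge\rho^\ast$ on $L^\Psi$ is clear, and it also follows that $\bar\rho_j\le\bar\rho$ (so $\bar\rho$ is the largest such extension). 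The delicate inequality is $\bar\rho_j^\ast\le\rho^\ast$, i.e.\ that for $Z\in L^\Psi$ the conjugate $\sup_{X\in L^\Phi}(\E[ZX]-\bar\rho_j(X))$ is already attained, in the limit, by bounded $X$. To see this I would fix $Z\in L^\Psi$ and $X\in L^\Phi$, take finite $\sigma$-algebras $\mathcal{G}_n$ as above with $\bigvee_n\mathcal{G}_n\supseteq\sigma(X,Z)$, and set $X_n:=\E[X\,|\,\mathcal{G}_n]\in L^\infty$. The dilation inequality applied to the law-invariant convex functional $\bar\rho_j$ gives $\bar\rho_j(X_n)\le\bar\rho_j(X)$, while the identity $\E[ZX_n]=\E[\E[Z\,|\,\mathcal{G}_n]\,\E[X\,|\,\mathcal{G}_n]]$ together with the Orlicz martingale convergence $\E[ZX_n]\to\E[ZX]$ yields
\[
\rho^\ast(Z)=\sup_{W\in L^\infty}\bigl(\E[ZW]-\bar\rho_j(W)\bigr)\ge\limsup_n\bigl(\E[ZX_n]-\bar\rho_j(X_n)\bigr)\ge\E[ZX]-\bar\rho_j(X).
\]
Taking the supremum over $X\in L^\Phi$ gives $\bar\rho_j^\ast(Z)\le\rho^\ast(Z)$, hence $\bar\rho_j^\ast=\rho^\ast$ on $L^\Psi$ and therefore $\bar\rho_1=\bar\rho_2=\bar\rho$. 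The two ingredients carrying this step --- the dilation/conditional-Jensen inequality for law-invariant convex functionals on $L^\Phi$, and the weak convergence $\E[Z\,\E[X\,|\,\mathcal{G}_n]]\to\E[ZX]$ in the pairing $(L^\Phi,L^\Psi)$ (which needs a uniform-integrability argument via the contraction property of conditional expectation on Orlicz spaces, with the non-superlinear case $L^\Phi=L^1$ treated separately) --- are where the real work lies; the former is the reusable law-invariance lemma underlying Theorem~\ref{repp}, and I would isolate it as such.
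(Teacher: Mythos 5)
Your proposal takes a genuinely different route from the paper's. You build the extension by an explicit dual formula over $L^\Psi$ and prove uniqueness by Fenchel--Moreau biconjugation in the pairing $(L^\Phi,L^\Psi)$; the paper instead obtains existence by restricting the Filipovi\'{c}--Svindland extension to $L^1$ (\cite[Theorem~2.2]{FS:12}) to $L^\Phi$, where the Fatou property follows from dominated convergence plus norm lower semicontinuity on $L^1$, and obtains uniqueness from a localization argument (Lemma~\ref{uniq}): for every $X$ there is a sequence of partitions with $\E[X|\pi_n]\xrightarrow{o}X$, conditional expectations do not increase the functional, hence $\rho_j(\E[X|\pi_n])\to\rho_j(X)$ and agreement on $L^\infty$ propagates to $L^\Phi$. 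Your architecture is sound and has the merit of exhibiting the dual representation of the extension as a by-product; however, as written it outsources exactly the two results that are the paper's technical core, and one of them is a genuine gap.

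The first deferred ingredient, the dilation inequality $\bar\rho_j(\E[X|\pi])\le\bar\rho_j(X)$ on $L^\Phi$ (Proposition~\ref{con-con-e} in the paper), is not a routine transfer from the bounded case: truncation fails because $\norm{X1_{\{\abs{X}>k\}}}_\Phi\not\to0$ when $X\notin H^\Phi$, and the paper needs Lemma~\ref{critical} to get around this. You can, however, legitimately recover it from results you cite: Theorem~\ref{repp} makes $\bar\rho_j$ $\sigma(L^\Phi,L^\infty)$ lower semicontinuous, hence equal to its biconjugate over \emph{bounded} dual variables; the conjugate $\beta_j$ on $L^\infty$ is law-invariant (Hardy--Littlewood, using nonatomicity) and norm lower semicontinuous, so the classical $L^\infty$ dilation inequality (Lemma~\ref{bounded} plus convexity and lower semicontinuity) applies to $\beta_j$, and the identity $\E[W\,\E[X|\pi]]=\E[\E[W|\pi]\,X]$ transfers it back to $\bar\rho_j$. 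The second ingredient is the real problem: the convergence $\E[Z\,\E[X|\mathcal{G}_n]]\to\E[ZX]$ for $Z\in L^\Psi$ and $X\in L^\Phi$ along a filtration is \emph{not} a ``uniform-integrability argument via the contraction property.'' Contraction only gives boundedness, and the natural UI family genuinely fails: $\sup_{\norm{W}_\Psi\le1}\E[\abs{XW}1_A]$ is comparable to $\norm{X1_A}_\Phi$, which does not vanish as $\probp(A)\to0$ when $X\notin H^\Phi$. This convergence is essentially Proposition~\ref{coex}, to which the paper devotes a delicate argument --- and even then only for the net of \emph{all} partitions, whose tail can be chosen adapted to $X$, which your fixed filtration is not. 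The statement you need is in fact true: truncate both variables and bound the tail-times-tail term $\E\big[\E[X1_{\{X>k\}}|\mathcal{G}_n]\,\E[Z1_{\{Z>k\}}|\mathcal{G}_n]\big]$ atomwise by conditional Jensen together with $\Phi^{-1}(t)\Psi^{-1}(t)\le 2t$, which gives a bound by $2\lambda\mu\big(\E[\Phi(X/\lambda)1_{\{X>k\}}]+\E[\Psi(Z/\mu)1_{\{Z>k\}}]\big)$, uniform in $n$; but some such proof must be supplied. Alternatively, and most cleanly, the gap disappears if you run your uniqueness argument in the pairing $(L^\Phi,L^\infty)$ rather than $(L^\Phi,L^\Psi)$: Theorem~\ref{repp} equally gives $\sigma(L^\Phi,L^\infty)$ lower semicontinuity, the test functions are then bounded, and $\E[W\,\E[X|\mathcal{G}_n]]\to\E[WX]$ is ordinary $L^1$ martingale convergence. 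With these two repairs your proof closes.
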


\smallskip

\noindent  
We also show that $\rho$ may not be extended in a unique way to an Orlicz space if one wants to preserve norm lower semicontinuity only.

\medskip

\noindent
{\bf Structure of the paper}. The paper is structured as follows. In Section
\ref{background}, we recall some fundamental facts about Orlicz spaces and risk
measures. In Section~\ref{ce}, we establish some properties of conditional
expectations on Orlicz spaces. In Section~\ref{li}, we study law-invariant sets
in Orlicz spaces. In Section~\ref{pf}, we provide proofs of the main results
together with some related corollaries.

%%%%%%%%%%%%%%%%%%%%%%%%%%%%%%%%%%%%%%%%%%%%%%

\section{Orlicz spaces and risk measures}
\label{background}

Throughout the paper we use standard notation from measure theory and functional analysis
as can be found, e.g., in Aliprantis and Border \cite{AB:06}.
We refer to Edgar and Sucheston \cite{ES:02} for a comprehensive account on
Orlicz spaces. A function $\Phi:[0,\infty) \rightarrow[0,\infty)$ is called an
\emph{Orlicz function} if it is convex, increasing, and $\Phi(0)=0$. Define
the \emph{conjugate function} of $\Phi$ by
\[
\Psi(s) = \sup\{ts-\Phi(t) : t \geq 0\}, \ \ \ s\geq0.
\]
If $\lim_{t\to\infty}\frac{\Phi(t)}{t}=\infty$ (or, equivalently, $\Psi$ is
finite-valued), then $\Psi$ is also an Orlicz function, and its conjugate is
$\Phi$. Throughout this paper,  $(\Phi,\Psi)$ stands for a fixed Orlicz pair
satisfying $\Phi(t)>0$ for $t>0$ and
$\lim_{t\to\infty}\frac{\Phi(t)}{t}=\infty$. Note that our restrictions on
$\Phi$ are minor as they only eliminate the case where $L^\Phi$ coincides with
$L^1$ or $L^\infty$, in which cases our main results are either trivial or known.

\smallskip

Fix a nonatomic probability triple $(\Omega,\cF,\mathbb{P})$. In the sequel, we freely use the fact that, for any event $A\in\cF$ and any $p_1,\dots,p_k\geq0$ with $\sum_{i=1}^kp_i\leq\probp(A)$, there exist disjoint measurable subsets $A_1,\dots,A_k$ of $A$ such that $\probp(A_i)=p_i$ for $1\leq i\leq k$; see, e.g., \cite[Section~13.9]{AB:06}.

\smallskip

The \emph{Orlicz space} $L^\Phi:=L^\Phi(\Omega,\cF,\mathbb{P})$ is the Banach lattice of all
random variables $X$ (modulo a.s.~equality under $\probp$) such that
\[
\norm{X}_\Phi := \inf\left\{\lambda>0:\E\left[\Phi\left(\frac{|X|}{\lambda}
\right)\right]\leq 1\right\}<\infty.
\]
The norm $\|\cdot\|_{\Phi}$ is called the \emph{Luxemburg norm}. The subspace of
$L^\Phi$ consisting
of all $X\in L^\Phi $ such that
\[
\E\left[\Phi\left(\frac{|X|}{\lambda}\right)\right]<\infty \ \ \ \mbox{for all
$\lambda>0$}
\]
is conventionally called the \emph{Orlicz heart} of $L^\Phi$ and is denoted by
$H^\Phi$. It is well-known that $L^\infty\subset H^\Phi\subset L^\Phi\subset
L^1$ and that $H^\Phi$ is a norm closed subspace of $L^\Phi$.
Moreover, $L^\Phi=H^\Phi$ if, and only if, the Orlicz function $\Phi$ is
$\Delta_2$, i.e. there exist $t_0\in(0,\infty)$ and $k\in\mathbb{R}$ such that
$
\Phi(2t)<k\Phi(t)
$
for all $t \geq t_0$. We endow the conjugate Orlicz space $L^\Psi$ with
the \emph{Orlicz norm}
\[
\norm{Y}_\Psi := \sup_{X\in L^\Phi, \,\norm{X}_\Phi\leq1}\abs{\mathbb{E}[XY]}, \ \ \ Y\in L^\Psi,
\]
which is equivalent to the Luxemburg norm on $L^\Psi$.
Under the canonical duality $\langle X,Y\rangle:=\E[XY]$ for $X\in L^\Phi$ and $Y\in L^\Psi$, the space $L^\Psi$ can be identified with the \emph{order continuous dual} $(L^\Phi)_n^\sim$ of $L^\Phi$, which is a subspace of the norm dual $(L^\Phi)^*$ of $L^\Phi$. Moreover, $L^\Psi=(L^\Phi)^*$ if, and only if, the function $\Phi$ is $\Delta_2$.

\smallskip

A net $(X_\alpha)$ in $L^\Phi$ is said to \emph{order converge} to $X\in L^\Phi$, denoted $X_\alpha\xrightarrow{o}X$ in $L^\Phi$, if there exists a net $(Y_\alpha)$ in $L^\Phi$ such that $Y_\alpha\downarrow  0$ in $L^\Phi$ and $\abs{X_\alpha-X}\leq Y_\alpha$ for any $\alpha$. For a sequence $(X_n)$ in $L^\Phi$ and $X \in L^\Phi$ one can easily verify that $X_n\xrightarrow{o}X$ is equivalent to dominated almost sure convergence, i.e.~$X_n\rightarrow X$ a.s.~and $\abs{X_n}\leq Y$ for some $Y\in L^\Phi$ and all $n\geq 1$. A set $C\subset L^\Phi$ is \emph{order closed} in $L^\Phi$ if it contains the limit of every order convergent net with elements in $C$. It is well-known that, if $X_\alpha\xrightarrow{o}X$ in $L^\Phi$, then there exists a sequence $(\alpha_n)$ such that $X_{\alpha_n}\xrightarrow{o}X$. Thus, $C$ is order closed in $L^\Phi$ whenever it contains the limit of every order convergent sequence with elements in $C$. Note that every order closed set $C$ is automatically norm closed. Indeed, if $(X_n)\subset C$ converges in norm to $X$, then a subsequence $(X_{n_k})$ order converges to $X$ (see e.g.~\cite[Lemma 3.11]{GX:14}), so that $X\in C$.

\smallskip

A proper (i.e., not identically $\infty$) functional $\rho:L^\Phi\rightarrow(-\infty,\infty]$ is
said to have the {\em Fatou property} whenever
\[
\mbox{$X_n\to X$ a.s., \ $|X_n|\leq Y$ for some $Y\in L^\Phi$ and all $n\in\N$} \ \implies \
\rho(X)\leq\liminf_{n\to\infty}\rho(X_n).
\]
We say that $\rho$ is \emph{order lower semicontinuous} if the sublevel set $\{\rho\leq \lambda\}:=\{X\in L^\Phi : \rho(X) \leq \lambda\}$ is order closed for all $\lambda\in \R$. This is equivalent to
$$
X_n\xrightarrow{o}X\mbox{ in }L^\Phi\implies \liminf_{n\to\infty}\rho(X_n).
$$
In other words, as remarked in Biagini and Frittelli \cite{BF:10}, the Fatou property is equivalent to
order lower semicontinuity. As a result, it follows that a functional with the Fatou property
is automatically norm lower semicontinuous. If $\rho$ is additionally
assumed to be \emph{monotone (decreasing)}, i.e.~$\rho(X)\leq\rho(Y)$ for any
$X,Y\in L^\Phi$ with $X\geq Y$, then the Fatou property is also equivalent to
{\em continuity from above}, i.e.~$\rho(X_n)\to\rho(X)$ whenever $X_n\downarrow X$ in $L^\Phi$.

\smallskip

\noindent
A proper functional $\rho:L^\Phi\to(-\infty,\infty]$ is {\em convex} if $\rho(\lambda X+(1-\lambda)Y)\leq\lambda\rho(X)+(1-\lambda)\rho(Y)$ for any $X,Y\in L^\Phi$ and $\lambda\in[0,1]$ and {\em quasiconvex} if the sublevel set $\{\rho\leq\lambda\}$ is convex for every $\lambda\in\R$. Moreover, we say that $\rho$ is {\em positively homogeneous} if $\rho(\lambda X)=\lambda\rho(X)$ for all $X\in L^\Phi$ and $\lambda\in[0,\infty)$. The functional is called \emph{law-invariant} if $\rho(X)=\rho(Y)$ whenever $X,Y\in L^\Phi$ have the same law.

\smallskip

\noindent
In this paper we use cash-additive risk measures to illustrate our
general results on law-invariant functionals defined on Orlicz spaces. Recall
that $\rho$ is said to be a \emph{cash-additive risk measure} if it is monotone
and satisfies
\[
\rho(X+m1_\Omega)=\rho(X)-m
\]
for any $X\in L^\Phi$ and $m\in\mathbb{R}$. A cash-additive risk measure
is said to be \emph{coherent} if it is convex and positively homogeneous. Two
prominent cash-additive risk measures are the {\em Value-at-Risk} at level
$\alpha\in(0,1)$, which is defined by setting
\[
\VaR_\alpha(X) := \inf\{m\in\mathbb{R} : \mathbb{P}(X+m<0)\leq\alpha\}, \ \ \ X\in L^\Phi,
\]
and the {\em Expected Shortfall} at level $\alpha\in(0,1]$, which is given by
\[
\ES_\alpha(X) := \frac{1}{\alpha}\int_0^\alpha\VaR_\beta(X)\,\mathrm{d}\beta, \ \ \ X\in L^\Phi.
\]
We refer to the above-cited literature for more information about risk measures and their financial applications.

%%%%%%%%%%%%%%%%%%%%%%%%%%%%%%%%%%%%%%%%%%%%%%%%%%%%

\section{Conditional expectations on Orlicz spaces}\label{ce}

As emerges from Jouini et al. \cite{JST:06} and Svindland \cite{S:10},
conditional expectations played an important role in the study of law-invariant
risk measures on $L^\infty$. However, some key properties of conditional
expectations fail once we abandon the setting of bounded positions. This section
is devoted to collecting a variety of useful properties of conditional
expectations on $L^\Phi$ that will allow us to overcome this failure.

\smallskip

Recall that conditional expectations are contractions on Orlicz spaces, i.e.
\[
\bignorm{\mathbb{E}[X|\mathcal{G}]}_\Phi\leq \norm{X}_\Phi
\]
for any $X\in L^\Phi$ and any $\sigma$-subalgebra $\mathcal{G}$ of $\cF$ (see
\cite[Corollary~2.3.11]{ES:02}). In the sequel, we will write $\pi$ to denote a
finite measurable partition of $\Omega$ whose members have non-zero
probabilities, and denote by $\sigma(\pi)$ the finite $\sigma$-subalgebra
generated by $\pi$. For convenience, we always write
$$
\mathbb{E}[X|\pi] := \mathbb{E}[X|\sigma(\pi)].
$$
The
collection $\Pi$ of all such $\pi$'s is directed by refinement and we write
$\pi'\geq\pi$ whenever $\pi'$ is a refinement of $\pi$. In particular, the
family of conditional expectations $\big(\mathbb{E}[X|\pi]\big)$ becomes a net
with directed set $\Pi$. A fundamental result used in the $L^\infty$-case (see \cite{JST:06} and \cite{S:10}), is recorded in the following lemma.

\begin{lemma}\label{uni-exp}
For any $X\in L^\infty$ we have
\begin{equation}
\label{eq: convergence net cond exp bounded}
\mathbb{E}[X|\pi]\xrightarrow{\norm{\cdot}_\infty}X.
\end{equation}
\end{lemma}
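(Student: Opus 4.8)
The plan is to reduce the statement to a uniform approximation of $X$ by a simple function and then exploit the contraction property of conditional expectations on $L^\infty$. First I would fix $\varepsilon>0$ and use that $X\in L^\infty$ is essentially bounded: partitioning its essential range into finitely many intervals of length less than $\varepsilon$ and pulling these back under $X$ (discarding null pieces) yields a finite measurable partition $\pi_0\in\Pi$ whose members have non-zero probability. Setting $S:=\mathbb{E}[X|\pi_0]$, which is constant on each member of $\pi_0$, one checks that on each block the conditional average lies in the same interval as $X$, so that $\norm{X-S}_\infty<\varepsilon$.

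The crucial observation is the second step: since $S$ is $\sigma(\pi_0)$-measurable, for every refinement $\pi\geq\pi_0$ we have $\sigma(\pi_0)\subseteq\sigma(\pi)$ and hence $\mathbb{E}[S|\pi]=S$. This is precisely what allows the single partition $\pi_0$ to serve as a tail index for the net $\big(\mathbb{E}[X|\pi]\big)_{\pi\in\Pi}$, converting a pointwise approximation into control over the whole tail of the net.

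Combining these with the triangle inequality, for any $\pi\geq\pi_0$ I would estimate
\[
\norm{\mathbb{E}[X|\pi]-X}_\infty \leq \norm{\mathbb{E}[X-S|\pi]}_\infty + \norm{\mathbb{E}[S|\pi]-S}_\infty + \norm{S-X}_\infty.
\]
The middle term vanishes by the second step; the first term is bounded by $\norm{X-S}_\infty<\varepsilon$ because conditional expectation is a contraction on $L^\infty$ (indeed $\abs{\mathbb{E}[Y|\pi]}\leq\mathbb{E}[\abs{Y}\,|\,\pi]\leq\norm{Y}_\infty$ a.s.); and the third term is $<\varepsilon$ by construction. Hence $\norm{\mathbb{E}[X|\pi]-X}_\infty<2\varepsilon$ for all $\pi\geq\pi_0$, which is exactly the asserted norm convergence \eqref{eq: convergence net cond exp bounded} of the net.

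I do not anticipate a serious obstacle: the argument is elementary once one identifies $\pi_0$, the partition generating a good simple approximant, as the correct tail index. The only point requiring a little care is the first step, namely producing a genuinely uniform (rather than merely $L^1$ or almost-sure) simple approximation. This is available precisely because $X$ is bounded, so this is the place where the hypothesis $X\in L^\infty$ is essential and where the naive analogue would break down for general $X\in L^\Phi$.
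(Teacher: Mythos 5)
Your proof is correct and follows essentially the same route as the paper: both construct $\pi_0$ by pulling back a partition of $[-\norm{X}_\infty,\norm{X}_\infty]$ into intervals of length at most $\varepsilon$, so that $X$ has small oscillation on each block, and then control the whole tail $\pi\geq\pi_0$ of the net. Your use of the simple approximant $S=\mathbb{E}[X|\pi_0]$ together with the contraction property merely makes explicit the step the paper dismisses as ``easily seen'' (at the harmless cost of $2\varepsilon$ instead of $\varepsilon$).
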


\smallskip

Indeed, for any $\varepsilon>0$, by partitioning $[-\norm{X}_\infty,\norm{X}_\infty]$ into intervals of length at most $\varepsilon$ and considering the corresponding preimages under $X$, we obtain a partition $\pi_0=\{A_1,\dots,A_n\}\in\Pi$ such that the oscillation of $X$ on each $A_i$, $1\leq i\leq n$, is at most $\varepsilon$. Then it is easily seen that $\norm{\E[X|\pi]-X}_\infty\leq \varepsilon$ for all $\pi\geq \pi_0$.
This result, however, fails on Orlicz spaces in
general. Indeed, for $X\in L^\Phi$ we easily see that $\mathbb{E}[X|\pi]\in
L^\infty\subset H^\Phi$ for all $\pi\in\Pi$ and therefore
\[
\mathbb{E}[X|\pi]\xrightarrow{\norm{\cdot}_\Phi}X \ \iff \ X\in H^\Phi.
\]
In particular, condition~\eqref{eq: convergence net cond exp bounded} can be
extended to $L^\Phi$ if, and only if, we have $L^\Phi=H^\Phi$ or, equivalently,
$\Phi$ is $\Delta_2$. The right reformulation of~\eqref{eq: convergence net cond
exp bounded} in the context of a general Orlicz space is as follows.

\begin{proposition}\label{coex}
For every $X\in L^\Phi$ and every $Y\in L^\Psi_+$ we have
\[
\E\big[\abs{\mathbb{E}[X|\pi]-X}Y\big]\rightarrow0.
\]
\end{proposition}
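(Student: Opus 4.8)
The plan is to prove the stronger statement that $\abs{\mathbb{E}[X|\pi]-X}\,Y\to 0$ in $L^1(\probp)$ along the net $\Pi$, via a Vitali-type argument: I will show that this net tends to $0$ in probability and that the family $\{\abs{\mathbb{E}[X|\pi]-X}\,Y:\pi\in\Pi\}$ is uniformly integrable. Convergence in probability is the easy ingredient. Since $X\in L^\Phi\subset L^1$, one has $\mathbb{E}[X|\pi]\to X$ in $L^1(\probp)$: approximating $X$ in $L^1$ by a simple function $X'$ that is $\sigma(\pi')$-measurable for some $\pi'\in\Pi$, the contraction property of conditional expectations gives $\norm{\mathbb{E}[X|\pi]-X}_1\leq 2\norm{X-X'}_1$ for all $\pi\geq\pi'$, and letting $X'$ vary yields the claim. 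In particular $\abs{\mathbb{E}[X|\pi]-X}\to 0$ in probability, hence so does $\abs{\mathbb{E}[X|\pi]-X}\,Y$, since multiplication by the fixed finite-valued $Y$ preserves convergence in probability.

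The crux is uniform integrability. By the positive homogeneity of the statement in $X$ and in $Y$, I may rescale $X$ and $Y$ by positive constants (leaving the asserted convergence invariant up to a positive factor) and assume $\E[\Phi(\abs{X})]\leq 1$ and $\E[\Psi(Y)]\leq 1$, so that $\Phi(\abs{X})$ and $\Psi(Y)$ lie in $L^1$. Using $\abs{\mathbb{E}[X|\pi]}\leq\mathbb{E}[\abs{X}\,|\,\pi]$ I bound $\abs{\mathbb{E}[X|\pi]-X}\,Y\leq\mathbb{E}[\abs{X}\,|\,\pi]\,Y+\abs{X}\,Y$, and to the first summand I apply Young's inequality $st\leq\Phi(s)+\Psi(t)$ followed by conditional Jensen:
\[
\mathbb{E}[\abs{X}\,|\,\pi]\,Y\leq\Phi\big(\mathbb{E}[\abs{X}\,|\,\pi]\big)+\Psi(Y)\leq\mathbb{E}\big[\Phi(\abs{X})\,\big|\,\pi\big]+\Psi(Y).
\]
Thus $\abs{\mathbb{E}[X|\pi]-X}\,Y$ is dominated, uniformly in $\pi$, by $\mathbb{E}[\Phi(\abs{X})\,|\,\pi]+\Psi(Y)+\abs{X}\,Y$. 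The last two terms are fixed $L^1$ functions, while $\{\mathbb{E}[\Phi(\abs{X})\,|\,\pi]:\pi\in\Pi\}$ is the family of conditional expectations of the single integrable function $\Phi(\abs{X})$ and is therefore uniformly integrable: the set $\{\mathbb{E}[\Phi(\abs{X})|\pi]>c\}$ is $\sigma(\pi)$-measurable, so $\E[\mathbb{E}[\Phi(\abs{X})|\pi]\mathbf{1}_{\{\mathbb{E}[\Phi(\abs{X})|\pi]>c\}}]=\E[\Phi(\abs{X})\mathbf{1}_{\{\mathbb{E}[\Phi(\abs{X})|\pi]>c\}}]$, and $\probp(\mathbb{E}[\Phi(\abs{X})|\pi]>c)\leq\E[\Phi(\abs{X})]/c\to 0$ uniformly in $\pi$, whence the tails vanish uniformly by absolute continuity of the integral. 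Being dominated by a uniformly integrable family, $\{\abs{\mathbb{E}[X|\pi]-X}\,Y:\pi\in\Pi\}$ is uniformly integrable.

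I then combine the two ingredients. Writing $R_\pi:=\abs{\mathbb{E}[X|\pi]-X}\,Y$ and fixing $\varepsilon>0$, uniform integrability gives $c$ with $\sup_\pi\E[R_\pi\mathbf{1}_{\{R_\pi>c\}}]<\tfrac{\varepsilon}{2}$, while the truncations $R_\pi\wedge c$ are bounded by $c$ and tend to $0$ in probability, so their expectations tend to $0$ along the net; splitting $\E[R_\pi]\leq\E[R_\pi\wedge c]+\E[R_\pi\mathbf{1}_{\{R_\pi>c\}}]$ yields $\E[R_\pi]<\varepsilon$ for all $\pi$ large enough, which is the assertion. The main obstacle is precisely the uniform integrability step: a naive H\"older estimate on $\mathbb{E}[\abs{X}\,|\,\pi]\,Y$ produces the Orlicz norm of a tail of $X$ (or of $Y$), which need not vanish when $\Phi$ or $\Psi$ fails the $\Delta_2$ condition. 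The Young--Jensen device circumvents this by trading the product for conditional expectations of a fixed $L^1$ function, whose uniform integrability is automatic and independent of any $\Delta_2$ assumption.
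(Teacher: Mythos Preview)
Your proof is correct and takes a genuinely different route from the paper's. The paper proceeds by first establishing, via a contradiction argument, the uniform tail estimate
\[
\lim_{k\to\infty}\sup_{\substack{\pi\in\Pi\\\{X>k\}\in\sigma(\pi)}}\E\big[\E[X|\pi]\,Y\,1_{\{X>k\}}\big]=0,
\]
constructing from a hypothetical failure a countably generated $\sigma$-subalgebra $\mathcal{G}$ with $\E[\E[X|\mathcal{G}]Y]=\infty$; it then truncates $X$ at a fixed level, applies Lemma~\ref{uni-exp} to the bounded part, and controls the unbounded part by this tail estimate. Your argument replaces this bespoke tail bound by a clean global uniform-integrability estimate: the Young--Jensen device
\[
\E[\abs{X}\,|\,\pi]\,Y\leq\E[\Phi(\abs{X})\,|\,\pi]+\Psi(Y)
\]
reduces everything to the classical fact that conditional expectations of a single $L^1$ function form a uniformly integrable family, after which Vitali (convergence in probability plus UI) finishes the job. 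This is shorter, avoids the contradiction construction entirely, and makes transparent why no $\Delta_2$ hypothesis is needed---a point you rightly flag, since a naive H\"older bound on the product would require tail norms to vanish. The paper's approach does isolate the uniform tail estimate as a standalone statement, which is marginally finer information than bare uniform integrability, but this extra precision is not exploited anywhere else in the paper.
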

\begin{proof}
Assume first that $X\geq0$. We claim that
\begin{equation}
\label{eq: coex}
\lim_{k\rightarrow\infty}\sup_{\substack{\pi\in\Pi\\A_k\in\sigma(\pi)}}
\E\big[\mathbb{E}[X|\pi]Y1_{A_k}\big]=0,
\end{equation}
where $A_k=\{X>k\}$ for $k\in\N$. To show this, assume that \eqref{eq: coex}
does not hold so that we find $\varepsilon>0$, $k_n\uparrow\infty$, and
$(\pi_n)\subset\Pi$ such that $A_{k_n}\in\sigma(\pi_n)$ for each $n\in\N$
satisfying
\[
\E\big[\mathbb{E}[X|\pi_n]Y1_{A_{k_n}}\big] > \varepsilon \ \ \ \mbox{for all
$n\in\N$}.
\]
Pick any $n\in\N$ and suppose that $A_{k_n}=B_{1,n}\cup\cdots\cup B_{l_n,n}$
with $B_{i,n}\in\pi_n$ for $1\leq i\leq l_n$. Since $\mathbb{P}(A_k)\to0$ as $k\to\infty$ and
$X\in L^1$, one can use Dominated Convergence to infer that $\E[X1_{B_{i,n}\cap A_{k}}]\to0$ as $k\to\infty$, so that $\E[X1_{B_{i,n}\backslash A_{k}}]\to\E[X1_{B_{i,n}}]$ as $k\to\infty$. The same holds for $Y$. Thus, by passing to a convenient subsequence of $(k_n)$, we may assume without loss of generality that each $k_{n+1}$ is large enough so that
\[
\E[X1_{B_{i,n}\backslash A_{k_{n+1}}}]\geq\frac{1}{2}\E[X1_{B_{i,n}}] \ \
\ \mbox{and} \ \ \ 
\E[Y1_{B_{i,n}\backslash A_{k_{n+1}}}]\geq\frac{1}{2}\E[Y1_{B_{i,n}}]
\ \ 
\]for all $1\leq i\leq l_n$.
Write $C_{i,n}=B_{i,n}\backslash A_{k_{n+1}}$ for all $1\leq i\leq l_n$. Then
\[
\sum_{i=1}^{l_n}\E[X|C_{i,n}]\E[Y1_{C_{i,n}}] \geq
\frac{1}{4}\sum_{i=1}^{l_n}\E[X|B_{i,n}]\E[Y1_{B_{i,n}}] =
\frac{1}{4}\E\big[\mathbb{E}[X|\pi_n]Y1_{A_{k_n}}\big] \geq
\frac{\varepsilon}{4}.
\]
Note that $ \mathcal{C}_n=\{C_{1,n},\dots,C_{l_n,n}\}$ is a measurable partition
of $A_{k_n}\backslash A_{k_{n+1}}$ for all $n\in\N$. Thus,
$\{A_{k_1}^c\}\cup\bigcup_n \mathcal{C}_n$ is a measurable partition of
$\Omega$. Let $\mathcal{G}$ be the generated $\sigma$-subalgebra of $\cF$. It
follows from the preceding inequality that
\[
\infty >
\bignorm{\mathbb{E}[X|\mathcal{G}]}_\Phi\norm{Y}_\Psi \geq
\E\big[\mathbb{E}[X|\mathcal{G}]Y\big] \geq
\sum_{n\in\N}\sum_{i=1}^{l_n}\E[X|C_{i,n}]\E[Y1_{C_{i,n}}] =
\infty.
\]
This contradiction completes the proof of \eqref{eq: coex}. As a result, for any
$\varepsilon>0$ there exists $k\in\N$ large enough such that
$\E[\mathbb{E}[X|\pi]Y1_{A_k}]<\varepsilon$ for any $\pi\in\Pi$ with
$A_k\in\sigma(\pi)$. Since $XY\in L^1$, we may take $k$ so large to ensure
$\E[XY1_{A_k}]<\varepsilon$. By the norm convergence of conditional
expectations on $L^\infty$, we find a finite measurable partition $\pi_0\in\Pi$
such that $A_k\in\sigma(\pi_0)$ and
$\bignorm{\mathbb{E}[X1_{A_k^c}|\pi]-X1_{A_k^c}}_\infty<\varepsilon$ for
any $\pi\in\Pi$ refining $\pi_0$. Take now any of such $\pi$'s and note that,
since $A_k\in\sigma(\pi)$, we have
\begin{align*}
\E\big[\abs{\mathbb{E}[X|\pi]-X}Y\big]
=&
\E\big[\abs{\mathbb{E}[X|\pi]-X}Y1_{A_k}\big]+\E\big[\abs{\mathbb{E}[X|\pi]-X
}Y1_{A_k^c}\big] \\
\leq&
\E\big[\mathbb{E}[X|\pi]Y1_{A_k}\big]+\E[XY1_{A_k}]+
\bignorm{\mathbb{E}[X1_{A_k^c}|\pi]-X1_{A_k^c}}_\infty\norm{Y}_1\\
<&
2\varepsilon+\varepsilon\norm{Y}_1.
\end{align*}
This establishes the assertion for $X\geq0$. To conclude, take now an
arbitrary $X\in L^\Phi$ and note that
\[
\E\big[\abs{\mathbb{E}[X|\pi]-X}Y\big] \leq
\E\big[\abs{\mathbb{E}[X^+|\pi]-X^+}Y\big]+\E\big[\abs{\mathbb{E}[X^-|\pi]-X^-}
Y\big]
\rightarrow 0
\]
by what we have just established.
\end{proof}

\smallskip

In light of the link between the Fatou property and order lower semicontinuity,
order convergence of conditional expectations is most desired.  But it also
fails on Orlicz spaces in general.

\begin{proposition}\label{not-o}
Let $X\in L^\Phi$. Then, $\mathbb{E}[X|\pi]\xrightarrow{o}X$ in $L^\Phi$ if and
only if $X\in L^\infty$.
\end{proposition}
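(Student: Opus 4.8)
The two implications require rather different ideas, so I would treat them separately. For the easy implication, the plan is to assume $X\in L^\infty$ and read off a dominating net directly from Lemma~\ref{uni-exp}. For each $\pi\in\Pi$ I would set $c_\pi:=\sup_{\pi'\geq\pi}\norm{\mathbb{E}[X|\pi']-X}_\infty\in[0,2\norm{X}_\infty]$ and let $Y_\pi:=c_\pi 1_\Omega$. Then $\abs{\mathbb{E}[X|\pi]-X}\leq c_\pi 1_\Omega=Y_\pi$, the net $(c_\pi)$ is decreasing along the refinement order, and Lemma~\ref{uni-exp} says exactly that $c_\pi\to0$; hence $Y_\pi\downarrow 0$ in $L^\infty\subseteq L^\Phi$ and $\mathbb{E}[X|\pi]\xrightarrow{o}X$. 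This uses only the uniform convergence afforded by boundedness.

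For the converse I would argue by contraposition: assuming $X\notin L^\infty$, show the net does not order converge. Replacing $X$ with $-X$ if necessary — which alters neither the conclusion $X\in L^\infty$ nor the quantity $\abs{\mathbb{E}[X|\pi]-X}$, hence neither side of the equivalence — I may assume $X$ is not essentially bounded above. Suppose, for contradiction, that $\mathbb{E}[X|\pi]\xrightarrow{o}X$, so there is a net $Y_\pi\downarrow0$ in $L^\Phi$ with $\abs{\mathbb{E}[X|\pi]-X}\leq Y_\pi$. Fix any $\pi_0\in\Pi$; since $Y_\pi\leq Y_{\pi_0}$ for $\pi\geq\pi_0$, the single a.s.-finite function $h:=Y_{\pi_0}+X$ satisfies $\mathbb{E}[X|\pi]\leq h$ a.s.\ for every $\pi\geq\pi_0$. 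Because $\pi_0$ is finite and $X$ is unbounded above, some cell $C_0\in\pi_0$ has $\operatorname{ess\,sup}_{C_0}X=\infty$, and I will contradict the existence of $h$ on $C_0$.

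The heart of the matter is the following dilution claim: for every measurable $S\subseteq C_0$ with $\mathbb{P}(S)>0$ and every $L>0$, there is $\pi\geq\pi_0$ with $\mathbb{E}[X|\pi]>L$ on a subset of $S$ of positive measure. To prove it I would first pick $T$ large enough that $\mathbb{P}(S\cap\{\abs X\leq T\})>0$, then pick $N>2L+T$ with $\mathbb{P}(\{X>N\}\cap C_0)>0$, and finally, using nonatomicity, choose $E\subseteq\{X>N\}\cap C_0$ and $S'\subseteq S\cap\{\abs X\leq T\}$ with $0<\mathbb{P}(E)=\mathbb{P}(S')$; these are automatically disjoint since $N>T$. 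Refining $\pi_0$ inside $C_0$ so that $G:=E\cup S'$ is one cell gives $\pi\geq\pi_0$ with $\mathbb{E}[X|\pi]=\mathbb{E}[X|G]\geq\tfrac12(N-T)>L$ on $G\supseteq S'$, since $\int_E X\geq N\mathbb{P}(E)$ and $\int_{S'}X\geq -T\mathbb{P}(S')$. I expect this construction to be the only genuinely technical point; the mild bookkeeping is keeping $X$ bounded below on $S'$ so that the conditional average is controlled from below.

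Granting the claim, no a.s.-finite $h$ can dominate the tail on $C_0$: for each $L\in\mathbb{N}$ the set $\{h\leq L\}\cap C_0$ must be null, since otherwise the claim applied to $S=\{h\leq L\}\cap C_0$ produces some $\pi\geq\pi_0$ with $\mathbb{E}[X|\pi]>L\geq h$ on a set of positive measure, contradicting $\mathbb{E}[X|\pi]\leq h$. Hence $h=\infty$ a.s.\ on $C_0$, contradicting $h=Y_{\pi_0}+X<\infty$ a.s. This rules out order convergence and completes the contrapositive. The conceptual obstacle throughout is recognizing that net order convergence forces a \emph{single} $L^\Phi$-majorant of the whole tail $\{\abs{\mathbb{E}[X|\pi]-X}\}_{\pi\geq\pi_0}$, which unboundedness defeats via the dilution step — whereas mere norm convergence, or the weaker conclusion of Proposition~\ref{coex}, imposes no such uniform domination.
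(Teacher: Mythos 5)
Your proof is correct and takes essentially the same route as the paper: the forward direction is identical (the dominating net $c_\pi 1_\Omega$ supplied by Lemma~\ref{uni-exp}), and your converse rests on the same dilution idea as the paper's, namely refining the partition so that one cell joins a slice of $\{X>N\}$ to a set of comparable measure on which $X$ is controlled, forcing the conditional expectation to exceed any prescribed level there. The only (cosmetic) difference is the packaging: the paper shows $\sup_{\pi\geq\pi_0}\mathbb{E}[X|\pi]=\infty$ a.s.\ on a cell of $\pi_0$ (via a finite cover by small sets, using integrability of $X$), so the net has no order-bounded tail, whereas you contradict the hypothesized dominating net $(Y_\pi)$ directly through the finite majorant $h=Y_{\pi_0}+X$.
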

\begin{proof}
Suppose that $X\in L^\infty$. For each $\pi\in\Pi$ set $\lambda_\pi=\sup_{\pi'\geq \pi}\norm{\E[X|\pi']-X}_\infty$ and note that $\lambda_\pi\downarrow0$ by Lemma~\ref{uni-exp}. Then, $\abs{E[X|\pi]-X}\leq\lambda_\pi1_\Omega$ implies $\mathbb{E}[X|\pi]\xrightarrow{o}X$ in $L^\Phi$.

Conversely, suppose that $X \notin L^\infty$. Without loss of generality, assume
that $\probp(X>k)>0$ for all $k\in\N$ and let $\pi_0=\{A_1,\dots, A_n\}$ be an
arbitrary member of $\Pi$. It is easy to see that there
exists some $A_i$, $1\leq i\leq n$, such that $\probp(A_i\cap\{X>k\})>0$ for all
$k\in\N$. Say, $A_1$ is as such. Fix now an arbitrary $k\in\N$. If
$\probp(A_1\cap\{X\leq k\})=0$, then we immediately see that
\[
\mathbb{E}[X|\pi_0]\geq k \ \ \ \mbox{a.s.~on $A_1$}.
\]
Otherwise, we must have $\probp(A_1\backslash\{X>k\})>0$. Set $c=\probp(A_1\cap\{X>k\})$. Since $X\in
L^1$, there exists $0<\varepsilon<c$ such that $\E[\abs{X}1_{B}]<\frac{kc}{2}$
whenever $B\in\cF$ satisfies $\probp(B)\leq\varepsilon$. 
By nonatomicity, we can take finitely many measurable subsets $B_1,\dots, B_j$ of $A_1\backslash\{X>k\}$ such that $\probp(B_i)<\varepsilon$ for each $1\leq i\leq j$ and $\bigcup_{i=1}^jB_i=A_1\backslash \{X>k\}$.
Now, for any $1\leq i\leq j$, the set $C_i=(A_1\cap\{X>k\})\cup B_i\subset A_1$ satisfies $\probp(C_i)\leq 2c$. Take a refinement $\pi_i\geq \pi_0$ such that $C_i\in \pi_i$.
Then, we have
\begin{align*}
\sup_{\pi\geq \pi_0}\E[X|\pi]\geq& \E[X|\pi_i] =\frac{1}{\probp(C_i)}\E[X1_{A_1\cap\{X>k\}}+X1_{B_i}]\\
\geq&
\frac{1}{2c}\big(\E[X1_{A_1\cap\{X>k\}}]-\E[\abs{X}1_{B_i}]\big)
\geq
\frac{1}{2c}\Big(kc-\frac{kc}{2}\Big) = \frac{k}{4} \ \ \ \mbox{a.s.~on $C_i$}.
\end{align*}
Since $\bigcup_{i=1}^jC_i=A_1$, we infer that
\[
\sup_{\pi\geq \pi_0}\mathbb{E}[X|\pi] \geq \frac{k}{4} \ \ \ \mbox{a.s.~on $A_1$}.
\]
Since $k$ was arbitrary, it follows that
\[
\sup_{\pi\geq\pi_0}\mathbb{E}[X|\pi]=\infty \ \ \ \mbox{a.s.~on $A_1$}.
\]
This implies that the net $(\mathbb{E}[X|\pi])$ has no order bounded tail and,
thus, does not order converge. In particular, $(\mathbb{E}[X|\pi])$ does not order converge to $X$.
\end{proof}

\smallskip

In spite of the preceding negative result, for every random variable $X\in L^\Phi$ we can always select a sequence of partitions with respect to which the conditional expectations of $X$ do order
converge to $X$ itself.

\begin{proposition}\label{ce-con}
For any $X\in L^\Phi$ there exists a sequence $(\pi_n)\subset\Pi$ such that
\[
\mbox{$\mathbb{E}[X|\pi_n]\xrightarrow{o} X$ in $L^\Phi$}.
\]
\end{proposition}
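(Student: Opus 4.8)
The plan is to exhibit, for a given $X\in L^\Phi$, a sequence of partitions $(\pi_n)\subset\Pi$ for which $\E[X|\pi_n]\to X$ almost surely while $\abs{\E[X|\pi_n]}\le Y$ for a single $Y\in L^\Phi$; by the characterisation of order convergence for sequences recalled in Section~\ref{background}, this is exactly $\E[X|\pi_n]\xrightarrow{o}X$. I would first dispose of the case $L^\Phi=H^\Phi$ (i.e.\ $\Phi$ is $\Delta_2$): there the whole net satisfies $\E[X|\pi]\xrightarrow{\norm{\cdot}_\Phi}X$, so a cofinal sequence $(\pi_n)$ with $\norm{\E[X|\pi_n]-X}_\Phi\to0$ can be extracted, and since norm convergence forces order convergence of a subsequence (see \cite[Lemma~3.11]{GX:14}), passing to that subsequence finishes this case.

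For the remaining case I would build the partitions by hand, controlling the conditional expectation through the \emph{size} of $\abs{X}$. Splitting $X=X^+-X^-$, the idea is to partition $\Omega$ at stage $n$ into a bottom region $\{\abs{X}<2^{-n}\}$, a middle region $\{2^{-n}\le\abs{X}\le c_n\}$ on which $X$ is bounded, and a top region, itself split into $\{X>c_n\}$ and $\{X<-c_n\}$. On the middle region $X$ is bounded, so by Lemma~\ref{uni-exp} it can be partitioned into finitely many cells on which the oscillation of $X$ is at most $2^{-n}$, whence $\abs{\E[X|\pi_n]}\le\abs{X}+1$ there; on the bottom region $\abs{\E[X|\pi_n]}\le2^{-n}\le1$. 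The crux is the top region: keeping $\{X>c_n\}$ as a single cell gives $\E[X|\pi_n]=\E[X\mid X>c_n]$ there, and I would choose the truncation levels $c_n\uparrow\infty$ so that $\E[X\mid X>c_n]\le2c_n$ (and symmetrically for $X^-$). Since $X>c_n$ on that cell, this yields $\abs{\E[X|\pi_n]}\le2\abs{X}$ there as well. Altogether $\abs{\E[X|\pi_n]}\le2\abs{X}+1\in L^\Phi$, giving the required uniform domination, while $\E[X|\pi_n]\to X$ a.s., because every $\omega$ with $0<\abs{X(\omega)}<\infty$ eventually lands in a middle cell of vanishing oscillation and every $\omega$ with $X(\omega)=0$ sits in the bottom cell.

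The hard part is the existence of the truncation levels, i.e.\ that $\liminf_{c\to\infty}\E[X^+\mid X^+>c]/c\le2$ (and the same for $X^-$). This is exactly where the standing assumption $\lim_{t\to\infty}\Phi(t)/t=\infty$ enters. The plan is to argue by contradiction: if $\E[X^+\mid X^+>c]>2c$ for all large $c$, then writing $g(c)=\E[(X^+-c)^+]$ and using $g'(c)=-\probp(X^+>c)$ one finds $\tfrac{\mathrm d}{\mathrm dc}\bigl(c\,g(c)\bigr)=g(c)-c\,\probp(X^+>c)>0$, so $c\,g(c)$ is eventually increasing; this forces $X^+$ to have so heavy a tail that, combined with the superlinear growth of $\Phi$, the only way to keep $\E[\Phi(X/\lambda)]<\infty$ is for $\Phi$ to grow no faster than $t$ times a polylogarithmic factor, i.e.\ to satisfy $\Delta_2$, contrary to the case already treated. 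I expect this tail/growth dichotomy to be the main obstacle and the place demanding the most care, since it is precisely the failure of $\Delta_2$ that both makes the statement non-trivial (the whole net does \emph{not} order converge, by Proposition~\ref{not-o}) and supplies the truncation levels that rescue the construction.
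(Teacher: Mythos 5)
You have correctly isolated where the difficulty lies, but the claim you reduce the proposition to --- that $\liminf_{c\to\infty}\E[X^+\mid X^+>c]/c\le2$ whenever $X\in L^\Phi$ with $\Phi$ superlinear and non-$\Delta_2$ --- is not merely hard, it is \emph{false}, so the top-region construction cannot be completed. Take $X$ with $\probp(X=10^k)=(0.06)^k$ for $k\ge1$ and $\probp(X=0)=1-\sum_{k\ge1}(0.06)^k$. For every $c\in[10^k,10^{k+1})$ one computes
\[
\E[X\mid X>c]
=\frac{\sum_{j\ge k+1}10^j(0.06)^j}{\sum_{j\ge k+1}(0.06)^j}
=\frac{(0.6)^{k+1}/0.4}{(0.06)^{k+1}/0.94}
=2.35\cdot 10^{k+1}>2.35\,c,
\]
so $\liminf_{c\to\infty}\E[X\mid X>c]/c=2.35>2$ and no truncation levels $c_n$ exist. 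Moreover this $X$ does live in a superlinear, non-$\Delta_2$ Orlicz space: let $\Phi$ be piecewise linear with nodes at the points $10^k$, the slope held constant between nodes except at a sparse sequence of indices $k_n$, where it jumps so that $\Phi(10^{k_n+1})\ge n\,\Phi(10^{k_n})$; then $\Phi(2\cdot10^{k_n})\ge\bigl(1+\tfrac{n-1}{9}\bigr)\Phi(10^{k_n})$, so $\Phi$ fails $\Delta_2$, the slopes increase to $\infty$, so $\Phi$ is superlinear, and since between bursts the terms $\Phi(10^k)(0.06)^k$ decay geometrically, choosing each $k_{n+1}$ large enough yields $\E[\Phi(X)]=\sum_k\Phi(10^k)(0.06)^k<\infty$. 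Variants of the same construction (tail-expectation contributions $\sim k^{-2}$ spread over very fast-growing atoms) even give $X$ in a non-$\Delta_2$ space with $\E[X\mid X>c]/c\to\infty$, so replacing your constant $2$ by any constant $K$, even one depending on $X$, cannot rescue the argument. Note also that your differential inequality $(c\,g(c))'>0$ is not self-contradictory: in the example above $c\,g(c)$ increases forever while $X$ is integrable.

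The error of principle is the final inference ``slow growth of $\Phi$ implies $\Delta_2$'': the $\Delta_2$ condition is a \emph{doubling} condition on the ratios $\Phi(2t)/\Phi(t)$, not a growth bound, and the two are independent. In the example, summability forces $\Phi$ to be polynomially bounded along the atoms of $X$, and yet $\Phi$ fails $\Delta_2$ because of slope bursts located exactly there; conversely, $t^p$ grows as fast as you like and is $\Delta_2$. This is precisely why the paper's proof handles the tail region differently: it never compares $\E[X|\pi_n]$ pointwise with a multiple of $\abs{X}$, but controls the tail discrepancy \emph{in the modular} via conditional Jensen. Concretely, the paper picks $k_n$ with $\E[\Phi(2\abs{X})1_{A_n}]\le2^{-n}$ where $A_n=\{\abs{X}\ge k_n\}$, requires $A_n\in\sigma(\pi_n)$, and uses Jensen plus convexity to get, for $Y_n=X1_{A_n}-\E[X1_{A_n}|\pi_n]$, the bound $\E[\Phi(\abs{Y_n})]\le\E[\Phi(2\abs{X})1_{A_n}]\le2^{-n}$; hence $\E\bigl[\Phi\bigl(\sup_n\abs{Y_n}\bigr)\bigr]\le\sum_n2^{-n}=1$ and the single dominating variable $\sup_n\abs{Y_n}$ lies in $L^\Phi$ with no assumption whatsoever on the tail ratio $\E[X\mid X>c]/c$. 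The bounded part is treated by the $L^\infty$ approximation with summable errors, exactly as in your middle region, and order convergence then follows from convergence in probability plus the common order bound. Your $\Delta_2$ case and your bottom/middle regions are fine; it is the top-region mechanism that must be replaced by this modular/Jensen control.
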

\begin{proof}
Without loss of generality assume that $\norm{X}_\Phi\leq \frac{1}{2}$ so that
$\E[\Phi(2\abs{X})]<\infty$. For each $n\in\N$ take $k_n\in\N$ large enough such
that
\begin{equation}
\label{eq: ce-con}
\E[\Phi(2\abs{X})1_{A_n}] \leq \frac{1}{2^n}
\end{equation}
where $A_n=\{\abs{X}\geq k_n\}$. Then, take $\pi_n\in\Pi$ satisfying
$A_n\in\sigma(\pi_n)$ and
\begin{equation}
\label{eq: ce-con 1}
\bignorm{X1_{A_n^c}-\mathbb{E}[X1_{A_n^c}|\pi_n]}_\infty\leq\frac{1}{2^n}.
\end{equation}
We claim that $\mathbb{E}[X|\pi_n]\xrightarrow{o} X$ in $L^\Phi$. To prove this,
fix $n\in\N$ and note first that
\begin{equation}
\label{eq: ce-con 2}
\E\big[\Phi(\E[2\abs{X}|A_n])1_{A_n}\big] \leq
\E\big[\E[\Phi(2\abs{X})|A_n]1_{A_n}\big] =
\E[\Phi(2\abs{X})1_{A_n}] \leq
\frac{1}{2^n}
\end{equation}
by the conditional version of Jensen's Inequality. Moreover, assumption
\eqref{eq: ce-con 1} ensures that
\begin{equation}
\label{eq: ce-con 3}
\bignorm{X1_{A_n^c}-\mathbb{E}[X1_{A_n^c}|\pi_n]}_\Phi \leq
\frac{\norm{1_\Omega}_\Phi}{2^n}.
\end{equation}
Finally, set
\[
Y_n=X1_{A_n}-\mathbb{E}[X1_{A_n}|\pi_n] \ \ \ \mbox{and} \ \ \
Z_n=X1_{A_n^c}-\mathbb{E}[X1_{A_n^c}|\pi_n]
\]
and note that
\[
X-\mathbb{E}[X|\pi_n]=Y_n+Z_n.
\]
It is clear that, setting
\[
X_0=\sup_{n\in\N}\abs{Y_n}+\sum_{n\in\N}\abs{Z_n}\,,
\]
we have $\abs{X-\mathbb{E}[X|\pi_n]}\leq X_0$ for all $n\in\N$. We claim that
$X_0\in L^\Phi$. To show this, note first that
\[
\norm{Z_n}_\Phi\leq\frac{\norm{1_\Omega}_\Phi}{2^n}
\]
by \eqref{eq: ce-con 3}. Hence, $(\sum_{n=1}^m\abs{Z_n})_m$ is a Cauchy sequence with respect to $\norm{\cdot}_\Phi$, and thus $\sum_{n\in\N}\abs{Z_n}\in L^\Phi$. By
\eqref{eq: ce-con} and \eqref{eq: ce-con 2} and by convexity of $\Phi$, we have
\[
\E[\Phi(\abs{Y_n})] \leq
\frac{1}{2}\E[\Phi(2\abs{X})1_{A_n}]+\frac{1}{2}\E\big[\Phi(\E[2\abs{X}|A_n]
)1_{A_n}\big] \leq
\frac{1}{2^n}
\]
for every $n\in\N$. Hence, the continuity and strict monotonicity of $\Phi$
yield
\[
\E\Big[\Phi\Big(\sup_{n\in\N}\abs{Y_n}\Big)\Big] =
\E\Big[\sup_{n\in\N}\Phi(\abs{Y_n})\Big] \leq
\sum_{n\in\N}\E[\Phi(\abs{Y_n})] \leq
\sum_{n\in\N}\frac{1}{2^n} = 1.
\]
This shows that $\sup_{n\in\N}\abs{Y_n}\in L^\Phi$ as well, so that $X_0\in
L^\Phi$. Now, by Markov's Inequality, we have
\[
\Phi(\varepsilon)\probp(\abs{Y_n}>\varepsilon) \leq \E[\Phi(\abs{Y_n})] \leq
\frac{1}{2^n}
\]
for any $\varepsilon>0$ and $n\in\N$. It follows that $(Y_n)$ converges to $0$
in probability. Since $\norm{Z_n}_\Phi \to 0$, \cite[Corollary~2.1.10]{ES:02}
implies that $(Z_n)$ also converges to $0$ in probability. As a result, the
sequence $(X-\mathbb{E}[X|\pi_n])$ converges to $0$ in probability. A
subsequence of it converges to $0$ a.s., and thus in order, since even the whole
sequence $(X-\mathbb{E}[X|\pi_n])$ is order bounded by $X_0$.
The sequence of partitions corresponding to this special subsequence, which we still
denote by $(\pi_n)$, is then easily seen to satisfy $\E[X|\pi_n]\xrightarrow{o} X$ in $L^\Phi$.
\end{proof}

%%%%%%%%%%%%%%%%%%%%%%%%%%%%%%%%%%%%%%%%%%%%%%%%%%%%%%%%%%

\section{Law-invariant sets in Orlicz spaces}\label{li}

In this section we establish a key result on law-invariant sets, which will be
later applied to level sets of law-invariant functionals defined on Orlicz
spaces. Here, we say that a subset $C$ of $L^\Phi$ is \emph{law-invariant} if
$X\in C$ for any $X\in L^\Phi$ that has the same law of some element of $C$.

\smallskip

We start with the following observation. For any $A\in\cF$ with $\probp(A)>0$ consider the ``localized'' probability space $(A,\cF_{|A},\probp_{|A})$, where we set $\cF_{|A}:=\{B\in\cF : B\subset A\}$ and
$\probp_{|A}:\cF_{|A}\to[0,1]$ is defined by $\probp_{|A}(B):=\probp(B|A)$. Note that $(A,\cF_{|A},\probp_{|A})$ is also nonatomic. For any $X\in L^\Phi$ we denote by $X_{|A}$ the random variable on $(A,\cF_{|A},\probp_{|A})$ obtained by restricting $X$ to $A$. Let now $(\Omega',\cF',\mathbb{P}')$ be any nonatomic probability space and recall that, applying any quantile function of $X$ to a uniform random variable over $(\Omega',\cF',\mathbb{P}')$, we obtain a random variable over $(\Omega',\cF',\mathbb{P}')$ that has the same law as $X$. Working at a ``localized'' level, we can use the same idea to show that, given two sets $A,B\in\cF$ with $\probp(A)=\probp(B)$, we always find a random variable $Z\in L^\Phi$ such that $X1_A$ has the same law as $Z1_B$.

\begin{lemma}\label{critical}
Let $X\in L^\Phi$, $\varepsilon >0$ and $\pi\in\Pi$ be fixed. Then, there exist
$B\in\cF$ with $\mathbb{P}(B)\leq\varepsilon$ and $X_1,\dots,X_N\in L^\Phi$ with
the same law as $X$ such that, setting $Y=\frac{1}{N}\sum_{i=1}^NX_i$, we have
\[
\mathbb{E}[Y|\pi]=\mathbb{E}[X|\pi], \ \ \norm{Y1_B}_\Phi\leq\varepsilon, \ \
Y1_{B^c}\in L^\infty.
\]
\end{lemma}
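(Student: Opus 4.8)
The plan is to build each $X_i$ as a rearrangement of $X$ that is measure-preserving \emph{within every block} of $\pi$, and to choose these rearrangements so that their ``large parts'' sit on pairwise disjoint slots inside each block. Two of the three requirements are then automatic: if $X_i$ has the same conditional law as $X$ given each $A\in\pi$, then $\mathbb{E}[X_i|\pi]=\mathbb{E}[X|\pi]$, hence $\mathbb{E}[Y|\pi]=\mathbb{E}[X|\pi]$; and since the global law of $X$ is the mixture of its block-conditional laws, each $X_i$ also has the same law as $X$. The whole difficulty is therefore concentrated in arranging the large values so that the average $Y$ becomes bounded off a small set whose $\norm{\cdot}_\Phi$-contribution is controlled.

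Concretely, write $\pi=\{A_1,\dots,A_m\}$ and fix a level $k$. On each block $A_j$ put $E_j=A_j\cap\{\abs{X}>k\}$ and $\delta_j=\mathbb{P}(E_j)$. Choosing $k$ large we may assume $N\delta_j\leq\mathbb{P}(A_j)$, so by nonatomicity we can carve out pairwise disjoint ``slots'' $R^j_1,\dots,R^j_N\subset A_j$ with $\mathbb{P}(R^j_i)=\delta_j$. Using the rearrangement principle recorded before the statement, define $X_i$ on each $A_j$ to be a measure-preserving rearrangement of $X_{|A_j}$ carrying the law of $X$ on $E_j$ onto the slot $R^j_i$ and the law of $X$ on $A_j\setminus E_j$ onto $A_j\setminus R^j_i$; gluing over $j$ yields a global $X_i\in L^\Phi$ with the same law as $X$. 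Set $Y=\frac1N\sum_{i=1}^NX_i$. By construction, at any point of $A_j$ at most one of the $X_i$ exceeds $k$ in modulus, while every other $X_{i'}$ is bounded by $k$ there.

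The decisive estimate is to take $B=\{\abs{Y}>2k\}$ rather than the raw union of slots. On $B^c$ we have $\abs{Y}\leq 2k$, so $Y1_{B^c}\in L^\infty$. On $B$, the pointwise bound $\abs{Y}\leq\frac1N\abs{X_i}+k$ (valid where the unique large index is $i$) together with $\abs{Y}>2k$ forces $\frac1N\abs{X_i}>k$, whence the additive $k$ is absorbed and $\abs{Y}1_B\leq\frac2N\abs{X}1_{\{\abs{X}>k\}}$ after rearrangement; this is the point that makes the scheme work despite $\Phi$ failing $\Delta_2$. To control the Orlicz norm, fix $\lambda_0$ with $\E[\Phi(\abs{X}/\lambda_0)]\leq1$ (possible as $X\in L^\Phi$) and set $\eta_k=\E[\Phi(\abs{X}/\lambda_0)1_{\{\abs{X}>k\}}]$, which decreases to $0$ by dominated convergence even when $X\notin H^\Phi$. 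Taking $N\geq 2\lambda_0/\varepsilon$ and using convexity of $\Phi$, the $N$-fold rearranged structure gives $\E[\Phi(\frac{1}{\varepsilon}\cdot\frac2N\abs{X}1_{\{\abs{X}>k\}})]\leq\frac{2\lambda_0}{\varepsilon}\eta_k$, so $\norm{Y1_B}_\Phi\leq\norm{\frac2N\abs{X}1_{\{\abs{X}>k\}}}_\Phi\leq\varepsilon$ once $k$ is large, while $\mathbb{P}(B)\leq N\,\mathbb{P}(\abs{X}>Nk)\to0$ gives $\mathbb{P}(B)\leq\varepsilon$. I expect the main obstacle to be the clean realization of the within-block rearrangements as a single global random variable with exactly the law of $X$ while keeping the large parts on disjoint slots; the companion analytic subtlety is the $\lambda_0$-rescaling of the tail, which is what lets the argument survive outside the Orlicz heart.
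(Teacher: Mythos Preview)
Your plan is essentially the same idea as the paper's: push the large values of $X$ onto $N$ disjoint ``slots'' inside each block of $\pi$, so that at every point at most one summand can be large and the remaining $N-1$ are uniformly bounded. The paper realizes this via a two-set swap (it places the slots inside $\{\abs{X}<c\}$ and exchanges values between each slot and the tail set), whereas you perform a full measure-preserving rearrangement within each block; both yield $X_i$ with the correct conditional laws and with $\abs{X_i}\leq k$ off the $i$-th slot. Your choice $B=\{\abs{Y}>2k\}$ is slightly slicker than the paper's $B=\bigcup_i B_i$, and you avoid the block-by-block gluing (and the LCM trick for a common $N$) by working on all of $\pi$ at once.

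There is, however, one genuine slip in the norm estimate. The pointwise bound you derive is $\abs{Y}1_B\leq\sum_{i}\tfrac{2}{N}\abs{X_i}1_{B\cap\bigcup_j R^j_i}$, a sum of $N$ \emph{disjointly supported} pieces each having the law of $\tfrac{2}{N}\abs{X}1_{\{\abs{X}>k\}}$. That does \emph{not} give $\norm{Y1_B}_\Phi\leq\bignorm{\tfrac{2}{N}\abs{X}1_{\{\abs{X}>k\}}}_\Phi$: for disjoint equidistributed pieces the Luxemburg norm of the sum can exceed that of a single piece (think of $a1_{C_1},\dots,a1_{C_N}$). The quantity you bound, $\E\big[\Phi\big(\tfrac{2}{N\varepsilon}\abs{X}1_{\{\abs{X}>k\}}\big)\big]$, is the $\Phi$-integral of a \emph{single} copy, not of $\abs{Y}1_B$. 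The correct computation --- which is exactly what the paper does --- keeps the factor $N$ explicit:
\[
\E\Big[\Phi\Big(\frac{\abs{Y}1_B}{\varepsilon}\Big)\Big]
\;\leq\;\sum_{i=1}^N\E\Big[\Phi\Big(\frac{2\abs{X_i}}{N\varepsilon}\Big)1_{\bigcup_j R^j_i}\Big]
\;=\;N\,\E\Big[\Phi\Big(\frac{2\abs{X}}{N\varepsilon}\Big)1_{\{\abs{X}>k\}}\Big]
\;\leq\;N\cdot\frac{2\lambda_0}{N\varepsilon}\,\eta_k
\;=\;\frac{2\lambda_0}{\varepsilon}\,\eta_k,
\]
using $\tfrac{2}{N\varepsilon}\leq\tfrac{1}{\lambda_0}$ and convexity in the last step. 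The extraneous $N$ from summing over slots is cancelled by the $1/N$ from convexity, and for $k$ large this is $\leq1$, giving $\norm{Y1_B}_\Phi\leq\varepsilon$ as desired. With this correction your argument goes through; the rest (law of $X_i$, $\E[Y|\pi]=\E[X|\pi]$, $\probp(B)\leq N\,\probp(\abs{X}>Nk)\to0$) is fine. Just be sure to fix $N\geq 2\lambda_0/\varepsilon$ \emph{first} and then choose $k$ large, so that the constraints $N\delta_j\leq\probp(A_j)$, $\probp(B)\leq\varepsilon$ and $\tfrac{2\lambda_0}{\varepsilon}\eta_k\leq1$ are met simultaneously.
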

\begin{proof}
Without loss of generality, we assume
that $\varepsilon<1$ and $\norm{X}_\Phi \leq 1$ so that $\E[\Phi(\abs{X})]\leq
1$.

\smallskip

Let $A\in\cF$ be such that $\probp(A)>0$.  Choose $N\in \N$ such that $N\geq\frac{2}{\varepsilon}$ and $c>0$ such that $\mathbb{P}(\{\abs{X}<c\}\cap A)>0$. Moreover, choose $c'>0$ large enough to ensure that
\[
c'>(N-1)c, \ \
\probp(\{\abs{X}>c'\}\cap A)\leq\frac{1}{N}\min\big(\mathbb{P}(\{\abs{X}<c\}\cap A),\varepsilon\big), \
\ \E[\Phi(\abs{X})1_{\{\abs{X}>c'\}}]\leq\frac{1}{N}.
\]
Set $C = \{|X| > c'\}\cap A$ and note that $N\probp(C) \leq \probp(\{|X| <c\}\cap A)$.
By nonatomicity, there exist pairwise disjoint measurable subsets $B_1,\dots, B_N$
of $\{|X|<c\}\cap A$ such that $\probp(B_i)=\probp(C)$ for all $1\leq i\leq N$. Observe that $B_i$ is disjoint from $C$ for all $1\leq i\leq N$ because $c'>c$. Now, for any fixed $i\in\{1,\dots,N\}$ we may use the above observation to ensure the existence of $Z_i,Z'_i\in L^\Phi$ such that $Z_i1_{B_i}$ has the same law as $X1_C$ and $Z_i'1_C$ has the same law as $X1_{B_i}$. Setting $X_i=Z_i1_{B_i}+Z_i'1_C+X1_{D_i}$, where $D_i=A\backslash(C\cup B_i)$, one clearly sees that $X_i=X_i1_A$ has the same law as $X1_A$. Define now $Y=\frac{1}{N}\sum^N_{i=1}X_i$. Moreover, set $B=\bigcup^N_{i=1}B_i\subset A$ and note that $\probp(B)\leq\varepsilon$. Since for each $1\leq i\leq n$ the random variable $Z'_i1_C$ has the same law as $X1_{B_i}$ and $\abs{X}<c$ a.s.~on $B_i$, we see that $\abs{Z'_i}<c$ a.s.~on $C$. This, together with the inclusion $D_i\subset A\backslash C\subset\{\abs{X}\leq c'\}$, implies that
\[
\abs{X_i1_{B^c}} = \abs{Z'_i1_{C\cap B^c}}+\abs{X1_{D_i\cap B^c}} \leq (c+c')1_\Omega.
\]
This shows that $Y1_{B^c}\in L^\infty$. Next, we claim that $\|Y1_B\|_\Phi \leq \varepsilon$. To prove this, fix $j\in\{1,\dots,N\}$ and note first that $\abs{X_j1_{B_i}}=\abs{X1_{D_j\cap B_i}}\leq c1_{B_i}$ for all $1\leq i\leq N$ with $i\neq j$. In addition, since $Z_j1_{B_j}$ has the same law as $X1_C$ and $\abs{X}>c'$ a.s.~on $C$, we easily see that $\abs{X_j}=\abs{Z_j}>c'$ a.s.~on $B_j$. As a result, we obtain
\begin{align*} 
N\abs{Y1_{B_j}}
\leq&
\sum_{\substack{i=1\\i\neq j}}^N\abs{X_i1_{B_j}} + \abs{X_j1_{B_j}} \leq {(N-1)c1_{B_j}+ \abs{X_j1_{B_j}}} \\
\leq&
c'1_{B_j} + \abs{X_j1_{B_j}} \leq {2}\abs{X_j1_{B_j}}.
\end{align*}
Since $\abs{X_i}>c'$ a.s.~on $B_i$ and $X_i$ has the same law as $X1_A$ for all $1\leq i\leq N$, it follows that
\begin{align*}
\E\bigg[\Phi\Big(\frac{N}{2}\abs{Y}\Big)1_B\bigg]
=&
\sum^N_{i=1}\E\bigg[\Phi\Big(\frac{N}{2}\abs{Y}\Big)1_{B_i}\bigg] 
\leq
\sum^N_{i=1}\E[\Phi(\abs{X_i})1_{B_i}] 
\leq\sum^N_{i=1}\E[\Phi(\abs{X_i})1_{\{\abs{X_i}>c'\}}]\\
=&N\E[\Phi(\abs{X1_A})1_{\{\abs{X1_A}>c'\}}]
\leq
N\E[\Phi(\abs{X})1_{\{\abs{X}>c'\}}] \leq 1.
\end{align*}
This yields $\norm{Y1_B}_\Phi \leq \frac{2}{N}\leq \varepsilon$ and concludes the proof of the claim. 

Suppose now that $\pi=\{A_1,\dots, A_n\}$ and fix $k\in\{1,\dots,n\}$. Applying the preceding argument to $A_k$, we can ensure the existence of a measurable set $B_k\subset A_k$ with $\mathbb{P}(B_k)\leq\frac{\varepsilon}{n}$ as well as of random variables $X_{ki}\in L^\Phi$, $1\leq
i\leq N_k$, such that $X_{ki}$ is zero a.s.~on $A_k^c$ and has the same law as $X1_{A_k}$ and such that, setting $Y_k=\frac{1}{N_k}\sum_{i=1}^{N_k}X_{ki}$, we have
\[
\norm{Y_k1_{B_k}}_\Phi \leq \frac{\varepsilon}{n} \ \ \ \mbox{and} \ \ \
Y_k1_{B_k^c}\in L^\infty.
\]
Now, set $Y=\sum_{k=1}^nY_k$ and $B=\bigcup_{k=1}^nB_k$. Then
$\mathbb{P}(B)\leq\varepsilon$ and $Y1_{B^c}\in L^\infty$. Moreover, since $Y_k$ vanishes on $A_j$ for $j\neq k$, we have $Y1_B=\sum_{k=1}^nY_k1_{B_k}$ and, hence, $\norm{Y1_B}_\Phi
\leq \varepsilon$. It also follows that
$\E[Y1_{A_k}]=\E[Y_k1_{A_k}]=\E[X1_{A_k}]$ for every $1\leq k\leq n$,
so that $\mathbb{E}[Y|\pi]=\mathbb{E}[X|\pi]$. Setting
$N=\prod_{k=1}^nN_k$ and, for every $1\leq k\leq n$, repeating each $X_{ki}$,
$1\leq i\leq N_k$, for $\frac{N}{N_k}$ times, we can write
$Y_k=\frac{1}{N}\sum_{i=1}^NX_i^{(k)}$, where each $X_i^{(k)}$ is zero a.s.~on $A_k^c$ and has the same law as $X1_{A_k}$. Then
$Y=\frac{1}{N}\sum_{i=1}^N\sum_{k=1}^nX_i^{(k)}$ and, clearly, each
$\sum_{k=1}^nX_i^{(k)}$ has the same law as $X$.
\end{proof}

\smallskip

To establish our key result on law-invariant sets we also need to use the
following result, which is contained in Step 2 in the proof of
\cite[Lemma~1.3]{S:10}.

\begin{lemma}\label{bounded}
Let $X\in L^\infty$, $\varepsilon>0$ and $\pi\in\Pi$. Then, there exist
$X_1,\dots,X_N\in L^\infty$ which have the same law as $X$ and satisfy
\[
\Bignorm{\frac{1}{N}\sum_{i=1}^NX_i-\mathbb{E}[X|\pi]}_\infty \leq \varepsilon.
\]
\end{lemma}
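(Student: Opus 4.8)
The plan is to build the desired rearrangements explicitly from a single uniform random variable via a cyclic shift in quantile space, after reducing the statement to a single cell of the partition. Write $\pi=\{A_1,\dots,A_n\}$ and recall that $\E[X|\pi]$ is the step function taking the constant value $\bar X_k:=\E_{|A_k}[X_{|A_k}]$ on each $A_k$. The first step is to observe that the construction may be carried out cell by cell: if for each $k$ we produce random variables $X^{(k)}_1,\dots,X^{(k)}_N$ on the localized space $(A_k,\cF_{|A_k},\mathbb{P}_{|A_k})$, each having the localized law of $X_{|A_k}$ and with $\norm{\frac1N\sum_{i}X^{(k)}_i-\bar X_k}_\infty\le\varepsilon$ on $A_k$, then the glued variables $X_i:=\sum_{k=1}^n X^{(k)}_i 1_{A_k}$ do the job. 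Indeed, for any Borel set $B$ one computes $\mathbb{P}(X_i\in B)=\sum_k \mathbb{P}(A_k)\,\mathbb{P}_{|A_k}(X^{(k)}_i\in B)=\sum_k\mathbb{P}(X\in B,A_k)=\mathbb{P}(X\in B)$, so that $X_i$ has the same law as $X$; and $\frac1N\sum_i X_i$ equals $\frac1N\sum_i X^{(k)}_i$ on each $A_k$, hence is within $\varepsilon$ of $\E[X|\pi]$ in $\norm{\cdot}_\infty$. This reduces everything to the single-cell problem of approximating the constant $\E[Z]$ uniformly by an average of rearrangements of a bounded $Z$ on a nonatomic space.

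For the single-cell problem, let $q$ denote a quantile function of $Z$ and let $V$ be any uniform random variable over the (nonatomic) localized space; such a $V$ exists by the splitting property recalled at the start of Section~\ref{background} (iterated dyadic splitting), and $q$ applied to a uniform variable reproduces the law of $Z$, as noted before Lemma~\ref{critical}. I would then set
\[
Z_i := q\Big(\big(V+\tfrac{i}{N}\big)\bmod 1\Big),\qquad i=1,\dots,N.
\]
Since a rotation of a uniform variable is again uniform, each $Z_i$ has the same law as $Z$. The point of the cyclic shift is that, evaluating at a fixed outcome where $V$ takes the value $t$, the average becomes
\[
\frac1N\sum_{i=1}^N q\Big(\big(t+\tfrac iN\big)\bmod 1\Big),
\]
which is precisely an equally spaced (offset $t$) Riemann sum of $q$ over $(0,1]$ for the integral $\int_0^1 q=\E[Z]=\bar Z$.

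The crux — and the step I expect to require the most care — is the uniform control of this Riemann sum over all offsets $t$ simultaneously, since we need the bound to hold for a.e.\ outcome, i.e.\ uniformly in the value $V(\omega)=t$. Here the essential feature is that a quantile function is \emph{monotone}: for a nondecreasing bounded $g$, an equally spaced Riemann sum with mesh $1/N$ deviates from $\int_0^1 g$ by at most one mesh times the oscillation, so that the deviation is at most $(\sup g-\inf g)/N\le 2\norm{Z}_\infty/N$, with the bound independent of the offset $t$. Thus $\norm{\frac1N\sum_i Z_i-\bar Z}_\infty\le 2\norm{Z}_\infty/N$. Applying this on each cell $A_k$ (where $\norm{X_{|A_k}}_\infty\le\norm{X}_\infty$), and choosing a single $N\ge 2\norm{X}_\infty/\varepsilon$ valid for all cells at once, yields the claim. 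The only genuinely delicate points are verifying that the monotone Riemann-sum estimate is uniform in the offset (handled by comparing the shifted grid sum to $\int_0^1 q$ via monotonicity, absorbing the harmless boundary term into the $\norm{X}_\infty/N$ bound) and confirming that one common $N$ works across all cells; both are controlled purely by $\norm{X}_\infty$, so no compactness or probability-space-specific argument is needed.
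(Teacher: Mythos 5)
Your proof is correct. All the load-bearing steps check out: the cell-by-cell gluing preserves the law of $X$, since $\probp(\{X_i\in B\}\cap A_k)=\probp(A_k)\,\probp_{|A_k}(X^{(k)}_i\in B)=\probp(\{X\in B\}\cap A_k)$; a uniform random variable exists on each localized nonatomic cell (the same fact the paper invokes just before Lemma~\ref{critical}); a rotation modulo $1$ preserves the uniform law; and the monotone Riemann-sum estimate is valid uniformly in the offset $t$ --- partitioning $(0,1]$ into the $N-1$ full intervals between consecutive grid points plus the two boundary pieces (whose lengths sum to $1/N$) and using monotonicity of $q$ gives $\bigabs{\frac1N\sum_{i=1}^N q\big((t+i/N)\bmod 1\big)-\int_0^1 q}\leq\frac{1}{N}\big(\sup q-\inf q\big)\leq\frac{2\norm{Z}_\infty}{N}$, so a single $N\geq 2\norm{X}_\infty/\varepsilon$ works across all cells. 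Your route is, however, genuinely different from the paper's: the paper gives no proof of Lemma~\ref{bounded} at all, importing it from Step 2 of the proof of Lemma~1.3 in Svindland \cite{S:10}, which is in effect a discrete counterpart of your argument --- it works with finitely many equal-measure pieces of each cell on which $X$ has small oscillation and averages rearrangements obtained by cyclically transporting the conditional laws across those pieces (the same law-transport device used in the proof of Lemma~\ref{critical}). Your construction replaces that discrete cyclic permutation by a continuous rotation in quantile space, which buys a self-contained proof with an explicit, offset-uniform error bound $2\norm{X}_\infty/N$ and no intermediate simple-function approximation, while sharing the feature that makes the cited argument work on arbitrary nonatomic spaces: the only structural input is the existence of a uniform random variable on each cell, not any point-level measure-preserving transformation.
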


\begin{proposition}\label{con-con-e}
Let $C$ be a convex, norm closed, law-invariant set in $L^\Phi$. Then,
$\mathbb{E}[X|\pi]\in C$ for any $X\in C$ and any $\pi\in\Pi$.
\end{proposition}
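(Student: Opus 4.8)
My plan is to prove the stronger claim that $\E[X|\pi]$ belongs to the $\norm{\cdot}_\Phi$-closure of the convex hull of the law-copies of $X$. Every law-copy of $X$ lies in $C$ by law-invariance, every convex combination of such copies lies in $C$ by convexity, and $C$ is norm closed; hence this claim immediately gives $\E[X|\pi]\in C$. The two ingredients I would combine are Lemma~\ref{critical}, to replace $X$ by an ``almost bounded'' element of $C$ with the same conditional expectation, and Lemma~\ref{bounded}, to treat the resulting bounded part.

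First, I would fix $\varepsilon>0$ and apply Lemma~\ref{critical} to $X$, $\varepsilon$ and $\pi$ to produce a set $B\in\cF$ with $\mathbb{P}(B)\leq\varepsilon$ and law-copies $X_1,\dots,X_N$ of $X$ for which $Y:=\frac1N\sum_{i=1}^NX_i$ satisfies $\E[Y|\pi]=\E[X|\pi]$, $\norm{Y1_B}_\Phi\leq\varepsilon$ and $Y1_{B^c}\in L^\infty$. Since each $X_i$ has the law of $X\in C$, law-invariance gives $X_i\in C$, and convexity gives $Y\in C$. Because $\E[Y|\pi]=\E[X|\pi]$ holds exactly, it then suffices to approximate $\E[Y|\pi]$ in $\norm{\cdot}_\Phi$ by convex combinations of law-copies of $Y$.

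Next, I would split $Y=U+V$ with $U:=Y1_{B^c}\in L^\infty$ and $V:=Y1_B$, so that $\norm{V}_\Phi\leq\varepsilon$. Applying Lemma~\ref{bounded} to $U$ and $\pi$, and reading off from its construction that the approximating copies of $U$ arise as measure-preserving rearrangements of $U$ fixing each member of $\pi$, I obtain measure-preserving maps $T_1,\dots,T_M$ of $\Omega$, each leaving every atom of $\pi$ invariant, with $\bignorm{\frac1M\sum_{i=1}^MU\circ T_i-\E[U|\pi]}_\infty\leq\varepsilon$. As each $T_i$ is measure preserving, each $Y\circ T_i$ is a law-copy of $Y$, so $W:=\frac1M\sum_{i=1}^MY\circ T_i\in C$; writing $Y\circ T_i=U\circ T_i+V\circ T_i$ and noting that every $V\circ T_i$ has the law of $V$, I get $\bignorm{\frac1M\sum_{i=1}^MV\circ T_i}_\Phi\leq\norm{V}_\Phi\leq\varepsilon$. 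Since $\E[Y|\pi]=\E[U|\pi]+\E[V|\pi]$ with $\norm{\E[V|\pi]}_\Phi\leq\norm{V}_\Phi\leq\varepsilon$ by contractivity, and since $\norm{Z}_\Phi\leq\norm{1_\Omega}_\Phi\norm{Z}_\infty$ for $Z\in L^\infty$, the triangle inequality yields $\norm{W-\E[Y|\pi]}_\Phi\leq(\norm{1_\Omega}_\Phi+2)\varepsilon$. As $W\in C$, as $\varepsilon$ is arbitrary, and as $C$ is norm closed, this gives $\E[X|\pi]=\E[Y|\pi]\in C$.

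The hard part will be the passage from the bounded case to the unbounded $Y$. Lemma~\ref{bounded} applies only to $L^\infty$, and feeding it the truncation $U$ yields law-copies of $U$, which need not belong to $C$. My remedy is to insist that these copies be realized by measure-preserving rearrangements fixing the atoms of $\pi$; the \emph{same} maps applied to all of $Y$ then deliver genuine law-copies of $Y$ (hence members of $C$) and simultaneously transport the small piece $V$ with its $\Phi$-norm unchanged, so its averaged contribution stays of order $\varepsilon$. Verifying that the construction behind Lemma~\ref{bounded} does provide such atom-preserving, measure-preserving rearrangements is the delicate point on which the argument hinges.
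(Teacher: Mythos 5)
Your overall skeleton---use Lemma~\ref{critical} to replace $X$ by an almost-bounded $Y\in C$ with $\E[Y|\pi]=\E[X|\pi]$, then handle the bounded part via Lemma~\ref{bounded}, and finish with norm closedness of $C$---is exactly the paper's. The gap is precisely the point you yourself flag as delicate: you need the law-copies produced by Lemma~\ref{bounded} to be of the form $U\circ T_i$ for measure-preserving \emph{point} transformations $T_i$ of $\Omega$ fixing each atom of $\pi$, so that the same maps can be applied to transport the unbounded piece $V$. This cannot be ``read off'' from Lemma~\ref{bounded}, and in the paper's generality it is not available: the probability space is only assumed nonatomic, not standard, and on such spaces measure-preserving point maps realizing prescribed rearrangements need not exist (realization of measure-algebra isomorphisms by point maps is a standard-space phenomenon). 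This is the very reason the paper quotes Svindland \cite{S:10} rather than the original argument of Jouini--Schachermayer--Touzi \cite{JST:06}: Svindland's proof of the cited step constructs the identically distributed random variables directly, by splitting sets of equal measure and redistributing values over the pieces, precisely in order to avoid point transformations. So the step $Y\circ T_i\in C$, on which your whole reduction from copies of $U$ to copies of $Y$ rests, is unjustified in this setting; on a standard space it could be salvaged, but then you would be proving a different (transformation-based) version of Lemma~\ref{bounded}, not invoking the stated one.

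The paper closes this hole with a localization/gluing trick that needs no point maps: apply Lemma~\ref{bounded} on the localized nonatomic space $(B^c,\cF_{|B^c},\probp_{|B^c})$ to the bounded variable $Y_{|B^c}$ and the partition $\{A_1\cap B^c,\dots,A_n\cap B^c\}$, obtaining copies $Y_j'$ which live on $B^c$ and have the law of $Y_{|B^c}$ under $\probp_{|B^c}$; then glue the untouched tail back by setting $Y_j:=Y1_B+Y_j'1_{B^c}$. Each $Y_j$ is a genuine law-copy of $Y$ (hence lies in $C$) because the copies of the bounded part are supported in $B^c$ and distributed according to the localized law---this support constraint is what replaces your requirement that the rearrangements fix the atoms and carry $V$ along. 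The price is that the copies now approximate $\sum_{i=1}^n\E[Y|A_i\cap B^c]1_{A_i\cap B^c}$ rather than $\E[Y|\pi]$, so an extra estimate (the paper's \eqref{eq: con-con-e}) is needed to compare the two, in place of your cleaner bound $\bignorm{\E[V|\pi]}_\Phi\leq\norm{V}_\Phi\leq\varepsilon$. If you want to keep your transport argument, you must either add a standardness assumption on $(\Omega,\cF,\probp)$ and prove an atom-preserving, transformation-based form of Lemma~\ref{bounded}, or else switch to the localization-and-gluing construction.
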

\begin{proof}
Let $X\in C$, $\pi=\{A_1,\dots,A_n\}\in\Pi$ and fix an arbitrary
$\varepsilon>0$. For any $B\in\cF$ and $Y\in L^\Phi$ we have
\begin{align*}
&\,\bignorm{\E[Y|A_i\cap B^c]1_{A_i\cap B^c}-\E[Y|A_i]1_{A_i}}_\Phi \\
\leq&\,
\bignorm{(\E[Y|A_i\cap B^c]-\E[Y|A_i])1_{A_i\cap B^c}}_\Phi +
\bignorm{\E[Y|A_i]1_{A_i\cap B}}_\Phi \\
=&\,
\bignorm{\probp(B|A_i)\big(\E[Y|A_i\cap B^c]-\E[Y|A_i\cap B]\big)1_{A_i\cap B^c}}_\Phi + \bignorm{\E[Y|A_i]1_{A_i\cap B}}_\Phi \\
\leq&\,
\bigg(\frac{\probp(B)\E[\abs{Y}]}{\probp(A_i)\probp(A_i\cap B^c)}+
\frac{\norm{Y1_B}_\Phi\norm{1_\Omega}_\Psi}{\probp(A_i)}\bigg)
\norm{1_\Omega}_\Phi+
\frac{\E[\abs{Y}]}{\probp(A_i)}\norm{1_B}_\Phi
\end{align*}
for every $1\leq i\leq n$. As a result, there exists $0<\delta<\varepsilon$ such
that, whenever $\probp(B)<\delta$, $\norm{Y1_B}_\Phi<\delta$ and
$\E[\abs{Y}]\leq\E[\abs{X}]$, we have
\begin{equation}
\label{eq: con-con-e}
\Bignorm{\sum_{i=1}^n\E[Y|A_i\cap B^c]1_{A_i\cap B^c}-\E[Y|\pi]}_\Phi <
\varepsilon.
\end{equation}
Here, we have used the fact that
$\norm{1_A}_\Phi\to0$ as $\probp(A)\to0$. 
Applying Lemma~\ref{critical}, we obtain $B\in\cF$ with $\probp(B)<\delta$ and
$X_1,\dots,X_N\in L^\Phi$ which have the same law as $X$ and satisfy
\[
\mathbb{E}[Y|\pi]=\mathbb{E}[X|\pi], \ \ \ \norm{Y1_B}_\Phi<\delta, \ \ \
Y1_{B^c}\in L^\infty,
\]
where we set $Y=\frac{1}{N}\sum_{i=1}^NX_i$. In particular, note that $Y\in C$
by convexity and law-invariance of $C$. Moreover, it follows from
$\abs{Y}\leq\frac{1}{N}\sum_1^N\abs{X_i}$ that $\E[\abs{Y}]\leq\E[\abs{X}]$, so
that \eqref{eq: con-con-e} holds. 
Now, consider the nonatomic probability space $(B^c,\cF_{|B^c},\probp_{|B^c})$. Applying Lemma~\ref{bounded} to $Y_{|B^c}$ and the partition $\{A_1\cap B^c,\dots,A_n\cap B^c\}$ of the state space $B^c$, we obtain $Y_1',\dots,Y_M'\in L^\infty(B^c,\cF_{|B^c},\probp_{|B^c})$ such that $Y'_j$ has the same law as $Y_{|B^c}$ for all $1\leq j\leq M$ and 
\[
\Bigabs{\sum_{i=1}^n\E_{|B^c}[Y_{|B^c}|A_i\cap B^c]1_{A_i\cap
B^c}-\frac{1}{M}\sum_{j=1}^MY_j'}\leq \varepsilon\quad \mbox{$\probp_{|B^c}$-a.s.~on $B^c$},
\]
where $\E_{|B^c}$ denotes the expectation under $\probp_{|B^c}$. A direct computation shows that $\E[Y|A_i\cap B^c]=\E_{|B^c}[Y_{|B^c}|A_i\cap B^c]$ for all $1\leq i\leq n$, so that
\[
\Bigabs{\sum_{i=1}^n\E[Y|A_i\cap B^c]1_{A_i\cap
B^c}-\frac{1}{M}\sum_{j=1}^MY_j'}\leq \varepsilon\quad \mbox{$\probp_{|B^c}$-a.s.~on $B^c$}.
\]
Set $Y_j=Y1_B+Y_j'1_{B^c}$ for $1\leq j\leq M$. Then, $Y_j$ has the same law as
$Y$ and, hence, $Y_j\in C$ by law-invariance, for every $1\leq j\leq M$. Note that
$\frac{1}{M}\sum_{j=1}^MY_j\in C$ by convexity of $C$ and that
\begin{align}
\label{eq: con-con-e 1}
\nonumber &\Bignorm{\sum_{i=1}^n\E[Y|A_i\cap B^c]1_{A_i\cap
B^c}-\frac{1}{M}\sum_{j=1}^MY_j}_\Phi\\
=&\Bignorm{\sum_{i=1}^n\E[Y|A_i\cap B^c]1_{A_i\cap
B^c}-\frac{1}{M}\sum_{j=1}^MY_j'1_{B^c}-Y1_B}_\Phi\\
\nonumber \leq& \varepsilon\norm{1_{B^c}}_\Phi+\norm{Y1_B}_\Phi\leq
\varepsilon\norm{1_\Omega}_\Phi+\varepsilon.
\end{align}
Since $\E[Y|\pi]=\E[X|\pi]$, we can easily combine \eqref{eq: con-con-e} and
\eqref{eq: con-con-e 1} to obtain
\[
\Bignorm{\E[X|\pi]-\frac{1}{M}\sum_{j=1}^MY_j}_\Phi \leq
(2+\norm{1_\Omega}_\Phi)\varepsilon.
\]
Finally, by norm closedness of $C$, we infer that $\mathbb{E}[X|\pi]\in C$.
\end{proof}

\begin{remark}
Note that, in order to obtain the above proposition, one cannot directly
truncate an arbitrary $X\in L^\Phi$ and then apply Lemma~\ref{bounded} because
the tail of $X$ may not have small norm. In fact,
$\norm{X1_{\{\abs{X}>k\}}}_\Phi\to0$ as $k\to\infty$ precisely when
$X\in H^\Phi$.
\end{remark}

\smallskip

Recall that every order closed set in $L^\Phi$ is automatically norm closed. This, together with Proposition~\ref{ce-con} and
Proposition~\ref{con-con-e}, immediately implies the following characterization
of the elements of a law-invariant set in terms of their conditional
expectations.

\begin{corollary}\label{cha-order}
Let $C$ be a convex, law-invariant, order closed set in $L^\Phi$. Then, for
every $X\in L^\Phi$ we have $X\in C$ if, and only if, $\E[X|\pi]\in C$ for any $\pi\in\Pi$.
\end{corollary}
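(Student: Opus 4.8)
The plan is to prove the two implications separately, each reducing immediately to one of the two preceding propositions once we recall that order closedness implies norm closedness (as noted in Section~\ref{background}). The genuine work has already been done in Propositions~\ref{ce-con} and~\ref{con-con-e}; the corollary is essentially a matter of assembling them, so I do not expect a real obstacle here.

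For the forward implication, suppose $X\in C$. Since $C$ is order closed, it is in particular norm closed. Hence Proposition~\ref{con-con-e} applies verbatim to $C$, and I conclude that $\E[X|\pi]\in C$ for every $\pi\in\Pi$. This direction is therefore a one-line invocation.

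For the converse, suppose $\E[X|\pi]\in C$ for every $\pi\in\Pi$. By Proposition~\ref{ce-con}, I may select a sequence $(\pi_n)\subset\Pi$ such that $\E[X|\pi_n]\xrightarrow{o}X$ in $L^\Phi$. Each term of this order convergent sequence lies in $C$ by hypothesis, and since $C$ is order closed it contains the order limit, namely $X$. Thus $X\in C$, as required.

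The only point deserving a word of care is that order closedness of $C$ was defined through order convergent \emph{nets}, whereas Proposition~\ref{ce-con} supplies a \emph{sequence}; but this gap was already closed in Section~\ref{background}, where it is observed that a set is order closed as soon as it contains the limit of every order convergent sequence of its elements. This is precisely the form in which I apply order closedness above, so the argument goes through without any additional effort.
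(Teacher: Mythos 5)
Your proof is correct and is essentially identical to the paper's own argument: the paper derives Corollary~\ref{cha-order} exactly by noting that order closedness implies norm closedness, applying Proposition~\ref{con-con-e} for the forward implication, and applying Proposition~\ref{ce-con} together with order closedness for the converse. One tiny quibble: the net-versus-sequence point you flag is a non-issue in the direction you need it (a sequence is itself a net, so an order closed set automatically contains limits of order convergent sequences of its elements); the remark from Section~\ref{background} that you cite actually addresses the opposite implication, which is not needed here.
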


\smallskip

We are now in a position to derive the main result of this section, which shows
that, for a law-invariant set in $L^\Phi$, order closedness and
$\sigma(L^\Phi,L^\Psi)$ closedness are equivalent. When applied to level sets of
convex, law-invariant functionals on $L^\Phi$, this equivalence will immediately
yield the desired characterization of the Fatou property in terms of
$\sigma(L^\Phi,L^\Psi)$ lower semicontinuity.

\begin{corollary}\label{rep}
For a convex law-invariant set $C$ in $L^\Phi$ the following are equivalent:
\begin{enumerate}[(1)]
\item $C$ is order closed.
\item $C$ is $\sigma(L^\Phi,L^\Psi)$ closed.
\item $C$ is $\sigma(L^\Phi,H^\Psi)$ closed.
\item $C$ is $\sigma(L^\Phi,L^\infty)$ closed.
\end{enumerate}
\end{corollary}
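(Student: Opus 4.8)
The plan is to establish the cycle of implications $(1)\Rightarrow(4)\Rightarrow(3)\Rightarrow(2)\Rightarrow(1)$, two links of which come essentially for free. Since $L^\infty\subset H^\Psi\subset L^\Psi$, the associated weak topologies are nested, $\sigma(L^\Phi,L^\infty)\subset\sigma(L^\Phi,H^\Psi)\subset\sigma(L^\Phi,L^\Psi)$, so any set that is closed in a coarser one of these topologies is automatically closed in any finer one. This yields $(4)\Rightarrow(3)\Rightarrow(2)$ at once, without invoking convexity or law-invariance.

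For $(2)\Rightarrow(1)$ I would argue that order convergence is already strong enough to force $\sigma(L^\Phi,L^\Psi)$ convergence. By the reduction to sequences noted earlier (a convex set is order closed as soon as it contains the limit of every order convergent sequence drawn from it), it suffices to take $(X_n)\subset C$ with $X_n\xrightarrow{o}X$ in $L^\Phi$, that is $X_n\to X$ a.s.\ with $\abs{X_n}\leq W$ for some $W\in L^\Phi$. For every $Z\in L^\Psi$ one has $\abs{X_nZ}\leq W\abs{Z}\in L^1$ by H\"older's inequality, so dominated convergence gives $\E[X_nZ]\to\E[XZ]$. Hence $X_n\to X$ in $\sigma(L^\Phi,L^\Psi)$, and $\sigma(L^\Phi,L^\Psi)$ closedness of $C$ forces $X\in C$.

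The crux is $(1)\Rightarrow(4)$, and this is where the machinery of the section enters. Assume $C$ is order closed, hence norm closed, and let $(X_\alpha)$ be a net in $C$ converging to some $X$ in $\sigma(L^\Phi,L^\infty)$. I would prove $X\in C$ by verifying the criterion of Corollary~\ref{cha-order}, namely that $\E[X|\pi]\in C$ for every $\pi\in\Pi$. Fix $\pi=\{A_1,\dots,A_n\}$. Since each $1_{A_i}\in L^\infty$, testing the net against $1_{A_i}$ gives $\E[X_\alpha 1_{A_i}]\to\E[X 1_{A_i}]$; as $\E[\,\cdot\,|\pi]$ is, on this finite partition, the finite linear combination $\sum_i\probp(A_i)^{-1}\E[\,\cdot\,1_{A_i}]1_{A_i}$, it follows that $\E[X_\alpha|\pi]\to\E[X|\pi]$ in $\norm{\cdot}_\Phi$. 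On the other hand, Proposition~\ref{con-con-e} guarantees $\E[X_\alpha|\pi]\in C$ for each $\alpha$, and norm closedness of $C$ then forces $\E[X|\pi]\in C$. As $\pi$ was arbitrary, Corollary~\ref{cha-order} yields $X\in C$, so $C$ is $\sigma(L^\Phi,L^\infty)$ closed.

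The main obstacle is precisely this last step, but its difficulty has already been absorbed into Proposition~\ref{con-con-e} and Corollary~\ref{cha-order}: the former (stability of convex, norm closed, law-invariant sets under conditional expectation) is what allows the weak limit $X$ to be replaced by the norm-convergent conditional expectations, and the latter lets us recover membership of $X$ from that of all its conditional expectations. Once these are in hand, the remaining ingredients --- the norm convergence of the finitely supported conditional expectations and the H\"older/dominated-convergence estimate --- are routine, and the four conditions close up into a single equivalence.
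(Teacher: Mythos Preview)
Your proposal is correct and follows essentially the same route as the paper: the only nontrivial implication is $(1)\Rightarrow(4)$, and you prove it exactly as the paper does, by pushing the $\sigma(L^\Phi,L^\infty)$-convergent net through the conditional expectation $\E[\,\cdot\,|\pi]$, invoking Proposition~\ref{con-con-e}/Corollary~\ref{cha-order} to keep the images in $C$, and then appealing to closedness and Corollary~\ref{cha-order} again to recover $X\in C$. The one cosmetic difference is that you observe $\E[X_\alpha|\pi]\to\E[X|\pi]$ in norm (since everything lives in the finite-dimensional span of the $1_{A_i}$'s) and use norm closedness directly, whereas the paper phrases the same convergence as weak convergence and then invokes convexity plus norm closedness to get weak closedness; your version is marginally more direct.
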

\begin{proof}
Clearly, we only need to prove that (1) implies (4). To this effect, assume that
$C$ is order closed and consider a net $(X_\alpha)\subset C$ and $X\in L^\Phi$
such that
\[
X_\alpha\xrightarrow{\sigma(L^\Phi,L^\infty)}X.
\]
Moreover, fix a partition $\pi=\{A_1,\dots,A_n\}\in\Pi$. Then, for any norm continuous linear functional $\phi:L^\Phi\rightarrow\R$, we have
\[
\phi\big(\E[X_\alpha|\pi]\big) =
\E\Big[X_\alpha\sum_{i=1}^n\frac{\phi(1_{A_i})}{\probp(A_i)}1_{A_i}\Big] \to
\E\Big[X\sum_{i=1}^n\frac{\phi(1_{A_i})}{\probp(A_i)}1_{A_i}\Big] =
\phi\big(\E[X|\pi]\big).
\]

Thus, $(\E[X_\alpha|\pi]) $ converges weakly to $\E[X|\pi]$. Since
$\mathbb{E}[X_\alpha|\pi]\in C$ for any index $\alpha$ by
Corollary~\ref{cha-order} and since, being order closed and thus norm closed, the convex set
$C$ is weakly closed, we infer that $\mathbb{E}[X|\pi]\in C$. In light of
Corollary~\ref{cha-order}, this yields $X\in C$ and proves that $C$ is
$\sigma(L^\Phi,L^\infty)$ closed.
\end{proof}

%%%%%%%%%%%%%%%%%%%%%%%%%%%%%%%%%%%%%%%%%%%%%%%%

\section{Proofs of the Main Results}\label{pf}

In this final section we prove the results stated in the
introduction and derive a
variety of corollaries for functionals and risk measures defined on Orlicz
spaces.

\begin{proof}[{\bf{Proof of Theorem~\ref{repp}}}]
It is straightforward to verify that $\rho$ is order lower semicontinuous if,
and only if, the level set
$\{\rho\leq \lambda\}$ is order closed for any $\lambda\in\mathbb{R}$. Recall
also that $\rho$ is
$\sigma(L^\Phi,L^\Psi)$ (respectively, $\sigma(L^\Phi,H^\Psi)$ and
$\sigma(L^\Phi,L^\infty)$) lower semicontinuous if, and only if, each level set
is $\sigma(L^\Phi,L^\Psi)$ (respectively, $\sigma(L^\Phi,H^\Psi)$ and
$\sigma(L^\Phi,L^\infty)$) closed. Since each level set is convex and
law-invariant, the equivalence follows directly from Corollary~\ref{rep}.
\end{proof}

\smallskip

\noindent
The following dual representation of functionals with the Fatou 
property is an immediate consequence of the above theorem and of Fenchel-Moreau duality.

\begin{corollary}
Let $\rho:L^\Phi\rightarrow(-\infty,\infty]$ be a proper, 
convex, law-invariant
functional with the Fatou property. Then, we have
\[
\rho(X) = \sup_{Z\in L^\Psi}\big(\E[ZX]-\rho^\ast(Z)\big), 
\ \ \ X\in L^\Phi,
\]
where
\[
\rho^\ast(Z) = \sup_{X\in L^\Phi}\big(\E[ZX]-\rho(X)\big), 
\ \ \ Z\in L^\Psi.
\]
The first supremum can be equivalently taken 
over $H^\Psi$ or $L^\infty$.
\end{corollary}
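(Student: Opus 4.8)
The plan is to combine Theorem~\ref{repp} with the Fenchel--Moreau biconjugation theorem. By Theorem~\ref{repp}, the Fatou property of $\rho$ is equivalent to $\rho$ being $\sigma(L^\Phi,L^\Psi)$ lower semicontinuous, and equally to $\sigma(L^\Phi,H^\Psi)$ and $\sigma(L^\Phi,L^\infty)$ lower semicontinuity. I intend to exploit all three formulations in parallel, since they will account for the three admissible index sets in the representation.

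First I would fix any one of the three subspaces $G\in\{L^\Psi,H^\Psi,L^\infty\}$ and observe that the canonical bilinear form $\langle X,Z\rangle=\E[XZ]$ puts $L^\Phi$ and $G$ into separating duality. Indeed, the products $\E[XZ]$ are well defined since $L^\Phi\subset L^1$ and $G\subset L^\Psi$, and the pairing separates points: for $X\in L^\Phi$ nonzero one separates by $Z=\mathrm{sgn}(X)1_{\{\abs{X}\leq n\}}\in L^\infty\subset G$ for $n$ large, while for $Z\in G$ nonzero one separates by $X=\mathrm{sgn}(Z)\in L^\infty\subset L^\Phi$. Thus $(L^\Phi,G)$ is a dual pair whose associated weak topology is precisely $\sigma(L^\Phi,G)$.

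Next I would invoke the Fenchel--Moreau theorem: since $\rho$ is proper, convex and (by Theorem~\ref{repp}) $\sigma(L^\Phi,G)$ lower semicontinuous, applying it to the dual pair $(L^\Phi,G)$ yields $\rho=\rho^{\ast\ast}$, that is,
\[
\rho(X)=\sup_{Z\in G}\big(\E[ZX]-\rho_G^\ast(Z)\big),\qquad X\in L^\Phi,
\]
where $\rho_G^\ast(Z)=\sup_{X\in L^\Phi}\big(\E[ZX]-\rho(X)\big)$ for $Z\in G$. The only point requiring care is to notice that $\rho_G^\ast$ is nothing but the restriction to $G$ of the conjugate $\rho^\ast$ computed on $L^\Psi$, since both are defined by the identical supremum over $L^\Phi$. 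Consequently, running this argument for the three choices $G=L^\Psi$, $G=H^\Psi$ and $G=L^\infty$ produces the same functional $\rho^\ast$ and the same value $\rho(X)$, so the representation holds with the supremum over $L^\Psi$ and may be equivalently restricted to $H^\Psi$ or $L^\infty$. Since the whole argument is an essentially mechanical invocation of Theorem~\ref{repp} followed by Fenchel--Moreau, no genuine obstacle arises beyond tracking the three separating pairings at once and recording that the conjugate does not depend on which of them is used as the dual index set.
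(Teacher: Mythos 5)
Your proposal is correct and follows exactly the paper's own route: the paper derives this corollary as an immediate consequence of Theorem~\ref{repp} together with Fenchel--Moreau duality, which is precisely your argument. The only difference is that you spell out the separating-duality checks for the three pairings and the identification of the conjugates, details the paper leaves implicit.
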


\smallskip

\noindent
We specify the preceding theorem to cash-additive risk measures. Here, we
denote by $\cM(L^\Psi)$ (respectively, $\cM(H^\Psi)$ and $\cM(L^\infty)$) the
set of all probability measures $\Q$ over $\Omega$ that are absolutely
continuous with respect to $\probq$ and such that $\frac{d\Q}{d\probp}$ belongs to
$L^\Psi$ (respectively, $H^\Psi$ and $L^\infty$).

\begin{corollary}
Let $\rho:L^\Phi\rightarrow(-\infty,\infty]$ be a proper, convex, law-invariant,
cash-additive risk measure with the Fatou property. Then, we have
\[
\rho(X) = \sup_{\Q\in\cM(L^\Psi)}\big(\E_\Q[-X]-\rho^\ast(\Q)\big), \ \ \ X\in
L^\Phi,
\]
where
\[
\rho^\ast(\Q) = \sup_{X\in L^\Phi}\big(\E_\Q[-X]-\rho(X)\big), \ \ \
\Q\in\cM(L^\Psi).
\]
The first supremum can be equivalently taken over $\cM(H^\Psi)$ or
$\cM(L^\infty)$.
\end{corollary}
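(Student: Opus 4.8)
The plan is to deduce this corollary directly from the preceding dual representation by computing the effective domain of the conjugate $\rho^\ast$. That representation already gives
\[
\rho(X) = \sup_{Z\in L^\Psi}\big(\E[ZX]-\rho^\ast(Z)\big), \ \ \ X\in L^\Phi,
\]
so the entire task reduces to showing that the supremum may be restricted to those $Z\in L^\Psi$ of the form $Z=-\frac{d\Q}{d\probp}$ with $\Q\in\cM(L^\Psi)$, i.e.~to $Z\leq0$ a.s.~with $\E[-Z]=1$, and then relabelling. Under this identification one has $\E[ZX]=\E_\Q[-X]$ and $\rho^\ast(Z)=\rho^\ast(\Q)$, which turns the formula above into the asserted one. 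Concretely, outside the target set I will show $\rho^\ast(Z)=\infty$, so that the corresponding term equals $-\infty$ and may be discarded from the supremum.

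First I would use monotonicity to force $Z\leq0$. If $\probp(Z>0)>0$, then testing $\rho^\ast$ against $X=\lambda1_{\{Z>0\}}\in L^\infty\subset L^\Phi$ for $\lambda>0$ gives $\E[ZX]=\lambda\,\E[Z1_{\{Z>0\}}]$ with $\E[Z1_{\{Z>0\}}]>0$, while $X\geq0$ and monotonicity yield $\rho(X)\leq\rho(0)$. Letting $\lambda\to\infty$ then drives $\E[ZX]-\rho(X)\to\infty$, so $\rho^\ast(Z)=\infty$. Hence $\rho^\ast$ is finite only when $Z\leq0$ a.s.

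Next I would use cash-additivity to pin down the normalization. For any $X\in L^\Phi$ and $m\in\R$,
\[
\E\big[Z(X+m1_\Omega)\big]-\rho(X+m1_\Omega)=\E[ZX]-\rho(X)+m\big(\E[Z]+1\big),
\]
so if $\E[Z]\neq-1$, then sending $m\to\pm\infty$ again forces $\rho^\ast(Z)=\infty$. Combining the two steps, the effective domain of $\rho^\ast$ lies in $\{Z\in L^\Psi:Z\leq0\text{ a.s.},\ \E[-Z]=1\}$, which is precisely the image of $\cM(L^\Psi)$ under $\Q\mapsto-\frac{d\Q}{d\probp}$. Restricting the supremum to this set (legitimate since the omitted terms are $-\infty$) and rewriting $\E[ZX]=\E_\Q[-X]$ and $\rho^\ast(Z)=\rho^\ast(\Q)$ gives the representation over $\cM(L^\Psi)$.

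For the final assertion I would simply invoke the refinement in the preceding corollary that the supremum may equally be taken over $H^\Psi$ or $L^\infty$. Since $H^\Psi$ and $L^\infty$ are sublattices containing the constants, the same two constraints cut them down to exactly the nonnegative densities in $H^\Psi$ and $L^\infty$ of unit expectation, i.e.~to $\cM(H^\Psi)$ and $\cM(L^\infty)$. There is no genuine obstacle in this corollary beyond sign bookkeeping, as the substantive work was already carried out in Theorem~\ref{repp} and the preceding corollary; the only point requiring mild care is to verify that the two conjugate expressions, the one indexed by $Z$ and the one indexed by $\Q$, agree term by term under $Z=-\frac{d\Q}{d\probp}$, which is immediate from $\E[ZX]=\E_\Q[-X]$.
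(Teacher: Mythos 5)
Your route is exactly the one the paper intends: the corollary is stated without proof as a specification of the preceding corollary, and the standard derivation is precisely what you do --- use monotonicity to force $Z\leq 0$ a.s., use cash-additivity to force $\E[Z]=-1$, discard the dual variables where $\rho^\ast=\infty$, and relabel $Z=-\frac{d\Q}{d\probp}$; the refinement to $\cM(H^\Psi)$ and $\cM(L^\infty)$ then follows by running the same two eliminations inside $H^\Psi$ and $L^\infty$, as you note. So in substance your proposal matches the paper's (implicit) argument.

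One step needs repair, though. In your monotonicity argument you bound $\E[ZX]-\rho(X)\geq\lambda\,\E[Z1_{\{Z>0\}}]-\rho(0)$ and let $\lambda\to\infty$; this blows up only if $\rho(0)<\infty$, and properness alone does \emph{not} guarantee that for a monotone, convex, cash-additive functional (take the acceptance set $\{X\geq Y_0\}$ for an unbounded $Y_0\in L^\Phi_+$: the induced risk measure is proper, monotone, convex, cash-additive, yet $\rho(0)=\infty$). Two fixes are available. The cheap one: test against $X=X_0+\lambda 1_{\{Z>0\}}$ where $\rho(X_0)<\infty$, so that monotonicity gives $\rho(X)\leq\rho(X_0)<\infty$ while $\E[ZX]=\E[ZX_0]+\lambda\,\E[Z1_{\{Z>0\}}]\to\infty$. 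Alternatively, under the corollary's hypotheses one can show $\rho(0)<\infty$, but this is not free: it uses Lemma~\ref{rest} (hence law-invariance, the Fatou property and Corollary~\ref{cha-order}) to get a bounded $X_1$ with $\rho(X_1)<\infty$, and then $\rho(0)=\rho(\norm{X_1}_\infty 1_\Omega)+\norm{X_1}_\infty\leq\rho(X_1)+\norm{X_1}_\infty<\infty$ by monotonicity and cash-additivity. The same remark applies to your normalization step: ``sending $m\to\pm\infty$'' must be done at some $X_0$ in the effective domain of $\rho$, which properness does supply. With either repair the proof is complete and correct.
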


\smallskip

\begin{proof}[{\bf{Proof of Theorem~\ref{cha}}}]
First, assume that $\Phi$ is $\Delta_2$. Then, $L^\Phi=H^\Phi$ and every order
convergent sequence is also norm convergent to the same limit by
\cite[Theorem~2.1.14]{ES:02}). In particular, every norm closed set is also
order closed. Let $\rho:L^\Phi\to(-\infty,\infty]$ be a proper, (quasi)convex,
law-invariant functional that is norm lower semicontinuous. Since every level
set of $\rho$ is norm closed and, hence, order closed, it follows that $\rho$ is
order lower semicontinuous or, equivalently, has the Fatou property. This shows
that (2) implies (1).

\smallskip

To prove the converse implication, assume that $\Phi$ is not $\Delta_2$ so that
$L^\Phi\neq H^\Phi$ and set
\[
C = \{X\in L^\Phi : X^-\in H^\Phi, \ \E[X]\geq 0\}.
\]
It is clear that $C$ is a law-invariant cone. Moreover, it is convex since for
any $X,Y\in C$ and $\lambda\in[0,1]$ we have
\[
0 \leq (\lambda X+(1-\lambda)Y)^- \leq \lambda X^-+(1-\lambda)Y^- \in H^\Phi,
\]
showing that $(\lambda X+(1-\lambda)Y)^-\in H^\Phi$. The set $C$ is also easily
seen to be monotone. Indeed, for any $X\in C$ and $Y\in L^\Phi$ with $Y\geq X$
it holds
\[
0 \leq Y^- \leq X^- \in H^\Phi,
\]
implying that $Y\in C$. We claim that $C$ is norm closed but not order closed.
To establish norm closedness, take $(X_n)\subset C $ and $X\in L^\Phi$
satisfying $X_n\xrightarrow{\norm{\cdot}_\Phi}X$. Since this implies
$X_n\xrightarrow{\norm{\cdot}_1}X$, we see that $\E[X]\geq0$. Moreover, since
$H^\Phi$ is norm closed, it follows from
$\norm{X_n^--X^-}_\Phi\leq\norm{X_n-X}_\Phi\rightarrow0$ that $X^-\in H^\Phi$.
Therefore, we conclude that $X\in C$, showing that $C$ is norm closed.

\smallskip

To prove that $C$ is not order closed, take a positive non-zero $Y\in
L^\Phi\backslash H^\Phi$. Moreover, take any $\lambda\geq\mathbb{E}[Y]$ and set
\[
X_n=\lambda1_\Omega-\min(Y,n1_\Omega), \ n\in\N, \ \ \ \mbox{and} \ \ \ X=\lambda1_\Omega-Y.
\]
Note that $(X_n)\subset L^\infty\subset H^\Phi$ so that $(X_n^-)\subset H^\Phi$
and $\E[X_n]\geq\lambda-\E[Y]\geq0$ for every $n\in\N$, showing that
$(X_n)\subset C$. Clearly, we have $X_n\downarrow X$ so that
$X_n\xrightarrow{o}X$ in $L^\Phi$. However, $X$ does not belong to $C$, for
otherwise $X^-\in H^\Phi$ and $0\leq X^+\leq\lambda1_\Omega\in H^\Phi$ would imply
$X\in H^\Phi$ and, thus, $Y\in H^\Phi$. This shows that $C$ is not order closed.

\smallskip

It follows from what we established above that $C$ is a law-invariant, coherent,
monotone subset of $L^\Phi$ that is norm closed but fails to be order closed.
Consider now the law-invariant, coherent, cash-additive risk measure
$\rho:L^\Phi\to[-\infty,\infty]$ defined by setting
\begin{equation}
\label{eq: cha}
\rho(X) = \inf\big\{m\in\mathbb{R} : X+m1_\Omega\in C\big\}, \ \ \ X\in L^\Phi.
\end{equation}
It is immediate to see that $\rho$ does not attain the value $-\infty$ and is
proper. Moreover, since $C$ is norm closed, $\rho$ is norm lower semicontinuous
by \cite[Corollary~3.3.8]{M:15}. However, since $\{\rho\leq0\}=C$ by
\cite[Proposition~3.2.7]{M:15}, it follows that $\rho$ is not order lower
semicontinuous or, equivalently, fails to have the Fatou property. This proves
that (1) implies (2) and concludes the proof.
\end{proof}

\begin{remark}
The functional defined in \eqref{eq: cha} provides an explicit example of a
law-invariant, coherent, cash-additive risk measure on $L^\Phi$ that is norm lower
semicontinuous but fails to satisfy the Fatou property when $\Phi$ is not $\Delta_2$.
\end{remark}

\smallskip

The last two results announced in the introduction, namely the generalization of
Kusuoka's representation and the Fatou-property-preserving extension result,
will be derived from the following ``localization'' lemma. This result has an
independent interest once we recall that, working in a general Orlicz space, the
space $L^\infty$ need not be norm dense in $L^\Phi$.

\begin{lemma}\label{uniq}
Let $\rho_1,\rho_2:L^\Phi\to(-\infty,\infty]$ be proper, quasiconvex, law-invariant
functionals with the Fatou property. Then, we have $\rho_1=\rho_2$ whenever
$\rho_1$ and $\rho_2$ coincide on $L^\infty$.
\end{lemma}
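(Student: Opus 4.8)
The plan is to leverage Corollary~\ref{cha-order}, which characterizes membership in a convex, law-invariant, order-closed set through conditional expectations over finite partitions. Since $\rho_1$ and $\rho_2$ both have the Fatou property, each of their sublevel sets $\{\rho_i\le\lambda\}$ is order closed; they are also convex (by quasiconvexity) and law-invariant. The key observation is that for any $X\in L^\Phi$ and any $\pi\in\Pi$, the conditional expectation $\E[X|\pi]$ is a simple function and hence lies in $L^\infty$, where $\rho_1$ and $\rho_2$ agree by hypothesis. This is precisely the bridge that lets the agreement on $L^\infty$ propagate to all of $L^\Phi$.

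First I would fix $X\in L^\Phi$ and $\lambda\in\R$, and show the equivalence $\rho_1(X)\le\lambda \iff \rho_2(X)\le\lambda$. Applying Corollary~\ref{cha-order} to the order-closed convex law-invariant set $C_1=\{\rho_1\le\lambda\}$, we have
\[
\rho_1(X)\le\lambda \iff \E[X|\pi]\in C_1 \text{ for all }\pi\in\Pi \iff \rho_1\big(\E[X|\pi]\big)\le\lambda \text{ for all }\pi\in\Pi.
\]
Since each $\E[X|\pi]\in L^\infty$, we have $\rho_1(\E[X|\pi])=\rho_2(\E[X|\pi])$, so the rightmost condition is identical for $\rho_1$ and $\rho_2$. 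Running the same chain of equivalences in reverse for $C_2=\{\rho_2\le\lambda\}$ then yields $\rho_2(X)\le\lambda$, and symmetrically. Hence $\{\rho_1\le\lambda\}=\{\rho_2\le\lambda\}$ for every $\lambda\in\R$, which forces $\rho_1(X)=\rho_2(X)$ for all $X$ (two functionals with identical sublevel sets for every real threshold are equal, accommodating the value $+\infty$ as the case where $X$ lies in no sublevel set).

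The only subtlety, rather than a genuine obstacle, is ensuring Corollary~\ref{cha-order} applies cleanly: it requires each sublevel set to be convex, law-invariant, and order closed. Convexity follows from quasiconvexity, law-invariance is inherited directly from that of $\rho_i$, and order closedness is exactly the reformulation of the Fatou property recorded in Section~\ref{background}. I expect no hard step here — the entire weight of the argument has already been absorbed into Corollary~\ref{cha-order} and, ultimately, into the delicate partition constructions of Lemma~\ref{critical} and Proposition~\ref{con-con-e}. The present lemma is essentially a formal consequence once those tools are in hand, with the single genuinely new ingredient being the trivial but crucial remark that conditional expectations land in $L^\infty$.
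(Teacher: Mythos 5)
Your proof is correct and rests on exactly the same machinery as the paper's: Corollary~\ref{cha-order} applied to sublevel sets (convex by quasiconvexity, law-invariant, order closed by the Fatou property), together with the crucial observation that $\E[X|\pi]\in L^\infty$ for every $\pi\in\Pi$. The only difference is packaging --- the paper fixes a sequence $(\pi_n)$ from Proposition~\ref{ce-con} and shows $\rho_i(\E[X|\pi_n])\to\rho_i(X)$ (Fatou property for the liminf bound, Corollary~\ref{cha-order} for the limsup bound), whereas you run the biconditional of Corollary~\ref{cha-order} at every level $\lambda$ and compare sublevel sets directly; both routes are equally valid.
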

\begin{proof}
Fix any $X\in L^\Phi$ and, by Proposition~\ref{ce-con}, take a sequence
$(\pi_n)\subset\Pi$ such that
$\mathbb{E}[X|\pi_n]\xrightarrow{o}X$. Since $\rho_1$ is order lower
semicontinuous, we have
\[
\rho_1(X) \leq \liminf_{n\to\infty} \rho_1(\mathbb{E}[X|\pi_n]).
\]
The set $C=\big\{Y\in L^\Phi:\rho_1(Y)\leq \rho_1(X)\big\}$ is convex,
law-invariant, order closed and clearly contains $X$. Hence, by
Corollary~\ref{cha-order}, we have $\mathbb{E}[X|\pi_n]\in C$ for every
$n\in\N$, so that
\[
\limsup_{n\to\infty}\rho_1(\mathbb{E}[X|\pi_n]) \leq \rho_1(X).
\]
As a consequence, we infer that $\rho_1(\mathbb{E}[X|\pi_n])\to\rho_1(X)$. The
same conclusion holds for $\rho_2$ as well. Since $\mathbb{E}[X|\pi_n]\in
L^\infty$ for every $n\in\N$, it follows from our assumption that
$\rho_1(X)=\rho_2(X)$.
\end{proof}

\smallskip

\noindent
The following result will be also needed, in addition to the preceding lemma, to
establish the generalization of Kusuoka's representation.

\begin{lemma}\label{rest}
Let $\rho:L^\Phi\rightarrow(-\infty,\infty]$ be a proper, quasiconvex, law-invariant
functional with the Fatou property. Then, its restriction to $L^\infty$ is also
proper and has the Fatou property.
\end{lemma}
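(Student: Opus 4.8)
The plan is to treat the two assertions separately, observing that the Fatou property of the restriction is essentially automatic while properness is the substantive point. For the Fatou property, suppose $(X_n)\subset L^\infty$ and $X\in L^\infty$ satisfy $X_n\to X$ a.s.\ and $\abs{X_n}\leq Y$ for some $Y\in L^\infty$ and all $n$. Since $L^\infty\subset L^\Phi$, the same domination holds with $Y$ regarded as an element of $L^\Phi$, so the Fatou property of $\rho$ on all of $L^\Phi$ yields $\rho(X)\leq\liminf_{n\to\infty}\rho(X_n)$ directly. No further work is needed here.

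For properness, the goal is to exhibit a single point of $L^\infty$ at which $\rho$ is finite, starting only from the hypothesis that $\rho$ is finite somewhere on $L^\Phi$. First I would fix $X_0\in L^\Phi$ with $\rho(X_0)<\infty$ and consider the sublevel set
\[
C=\big\{Y\in L^\Phi:\rho(Y)\leq\rho(X_0)\big\}.
\]
Since $\rho$ is quasiconvex this set is convex; since $\rho$ is law-invariant it is law-invariant; and since $\rho$ has the Fatou property---equivalently, is order lower semicontinuous---it is order closed. Moreover $X_0\in C$ by construction. I would then invoke Corollary~\ref{cha-order}: because $C$ is convex, law-invariant, order closed and contains $X_0$, the relevant direction of that characterization gives $\mathbb{E}[X_0|\pi]\in C$ for every $\pi\in\Pi$. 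Fixing any single $\pi$, the conditional expectation $\mathbb{E}[X_0|\pi]$ is a simple function and hence lies in $L^\infty$, while membership in $C$ means $\rho(\mathbb{E}[X_0|\pi])\leq\rho(X_0)<\infty$. This produces a point of $L^\infty$ at which $\rho$ is finite, so the restriction of $\rho$ to $L^\infty$ is proper; combined with the first paragraph, it has the Fatou property.

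The main (and indeed the only genuine) obstacle is the properness claim, since a priori $\rho$ could conceivably be finite on $L^\Phi$ yet identically $+\infty$ on the strictly smaller subspace $L^\infty$. The crux is recognizing that the level-set characterization of Corollary~\ref{cha-order} allows one to transport finiteness from an arbitrary point of $L^\Phi$ down to the bounded element $\mathbb{E}[X_0|\pi]$; once this observation is in place the argument is immediate, with no estimates required.
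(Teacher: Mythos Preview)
Your proof is correct and follows essentially the same approach as the paper: both arguments establish the Fatou property of the restriction by the trivial observation that $L^\infty$-domination implies $L^\Phi$-domination, and both establish properness by applying Corollary~\ref{cha-order} to the sublevel set $C=\{\rho\leq\rho(X_0)\}$ to obtain a bounded element of $C$. The only cosmetic difference is that the paper takes the trivial partition $\pi=\{\Omega\}$, landing on the constant $\mathbb{E}[X_0]1_\Omega$, whereas you allow an arbitrary $\pi\in\Pi$.
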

\begin{proof}
Denote by $\rho_{|L^\infty}$ the restriction of $\rho$ to $L^\infty$ and take
any $X_0\in L^\Phi$ such that $\rho(X_0)<\infty$. Then, since the sublevel set
$C=\big\{Y\in L^\Phi:\rho(Y)\leq \rho(X_0)\big\}$ is convex, law-invariant,
order closed and contains $X_0$, it follows from Corollary~\ref{cha-order} that
$\mathbb{E}[X_0]1_\Omega\in C$, so that $\rho(\mathbb{E}[X_0]1_\Omega)<\infty$. This
proves that $\rho_{|L^\infty}$ is proper. Take now a sequence
$(X_n)\subset L^\Phi$ and $X\in L^\Phi$ such that $X_n\to X$ a.s. and $\abs{X_n}\leq Y$ for some $Y\in L^\infty$ and
all $n\in\N$. Since $Y\in L^\Phi$, it follows that $X_n\xrightarrow{o}X$ in
$L^\Phi$. Thus, we infer that
$\rho_{|L^\infty}(X)\leq\liminf_{n\to\infty}\rho_{|L^\infty}(X_n)$ and this
shows that $\rho_{|L^\infty}$ has the Fatou property.
\end{proof}

\smallskip

\begin{proof}[{\bf{Proof of Theorem~\ref{KT}}}]
Denote by $\rho_{|L^\infty}$ the restriction of $\rho$ to $L^\infty$. It follows
from Lemma~\ref{rest} that $\rho_{|L^\infty}$ is a convex, law-invariant,
cash-additive risk measure satisfying the Fatou property. Then,
\cite[Theorem~4.62]{FS:04} implies that
\[
\rho_{|L^\infty}(X) =
\sup_{\mu\in\mathcal{P}((0,1])}\Big(\int_{(0,1]}
\ES_\alpha(X)\,\mathrm{d}\mu(\alpha)-\gamma(\mu)\Big), \ \ \ X\in L^\infty,
\]
where
\[
\gamma(\mu) = \sup_{X\in L^\infty, \,\rho(X)\leq
0}\int_{(0,1]}\ES_\alpha(X)\,\mathrm{d}\mu(\alpha), \ \ \ \mu\in
\mathcal{P}((0,1]).
\]
Now, define $\rho':L^\Phi\to(-\infty,\infty]$ by setting
\[
\rho'(X) =
\sup_{\mu\in\mathcal{P}((0,1])}\Big(\int_{(0,1]}
\ES_\alpha(X)\,\mathrm{d}\mu(\alpha)-\gamma(\mu)\Big), \ \ \ X\in L^\Phi.
\]
Clearly, $\rho'$ is a convex, law-invariant, cash-additive risk measure
satisfying the Fatou property. Moreover, since $\rho$ and $\rho'$ coincide on
$L^\infty$, it follows that $\rho=\rho'$ by Lemma~\ref{uniq}. This shows that
$\rho$ has the desired representation.
\end{proof}

\smallskip

\begin{proof}[{\bf{Proof of Theorem~\ref{ext}}}]
By \cite[Theorem~2.2]{FS:12}, $\rho $ admits a proper, convex, law-invariant
extension $\overline{\rho}$ to $L^1$ that is norm lower semicontinuous. Denote by $\rho'$ the restriction of \,$\overline{\rho}$ to $L^\Phi$ and note that $\rho'$ has the Fatou property. To see this, consider a sequence $(X_n)\subset L^\Phi$ and $X\in L^\Phi$ such that $X_n\to X$ a.s.~and $|X_n|\leq Y$ for some $Y\in L^\Phi$ and all
$n\geq 1$. Since $Y\in L^1$, the Dominated Convergence Theorem implies that $X_n\xrightarrow{\norm{\cdot}_1}X$ and therefore $\rho'(X)\leq\liminf_{n\to\infty}\rho'(X_n)$ by norm
lower semicontinuity of \,$\overline{\rho}$.
The uniqueness
follows from Lemma~\ref{uniq}.
\end{proof}

\begin{remark}
Differently from the case of bounded positions, Theorem 3 and Theorem 4 no longer hold if we replace the Fatou property, or equivalently order lower semicontinuity, by norm lower semicontinuity. Indeed, assume that
$\Phi$ is not $\Delta_2$ and let $\rho$ be the coherent, law-invariant, norm
lower semicontinuous, cash-additive risk measure constructed in~\eqref{eq: cha}.
Then, $\rho$ does not admit a Kusuoka-type representation because it would otherwise satisfy the Fatou property.
Moreover, note that, being cash-additive, the restriction of $\rho$ to $L^\infty$, denoted by
$\rho_{|L^\infty}$, is norm continuous.
Applying Theorem~\ref{ext}, we obtain a convex, law-invariant extension $\rho'$
of $\rho_{|L^\infty}$ to the whole of $L^\Phi$ that satisfies the Fatou
property. Now, $\rho$ and $\rho'$ coincide on $L^\infty$ but are not equal since
one of them has the Fatou property whereas the other does not.
\end{remark}

%%%%%%%%%%%%%%%%%%%%%%%%%%%%%%%%%%%%%%%%%%%%%%%%


\begin{thebibliography}{100}

\bibitem{AB:06}
Aliprantis, Ch.D., Border, K.C.: Infinite Dimensional Analysis: A Hitchhiker's Guide, 3rd edn. Springer, Berlin (2006)

\bibitem{ADEH:99}
Artzner, Ph., Delbaen, F., Eber, J.-M., Heath, D.: Coherent measures of risk, Math. Finance {\bf 9}, 203--228 (1999)

\bibitem{BF:10} Biagini, S., Frittelli, M.: On the extension of the Namioka-Klee
theorem and on the Fatou Property for risk measures. In: Delbaen, F., Rasonyi, M., Stricker, C. (eds.) Optimality and Risk: Modern Trends in Mathematical Finance, pp. 1--28. Springer, Berlin (2009)
1--28.

\bibitem{BK:12}
Belomestny, D., Kr\"{a}tschmer, V.: Central limit theorems for law-invariant coherent risk measures. J. Appl. Probab. {\bf 49}, 1--21 (2012)

\bibitem{BK:17}
Belomestny, D., Kr\"{a}tschmer, V.: Optimal stopping under probability distortions and law invariant coherent risk measures. Math. Oper. Res. (forthcoming). Preprint: arXiv:1506.04439 (2015)

\bibitem{CL:09}
Cheridito, P., Li, T.: Risk measures on Orlicz hearts. Math. Finance {\bf 19}, 189--214 (2009)

\bibitem{D:02}
Delbaen, F.: Coherent risk measures on general probability spaces, In: Sandmann, K., Sch\"{o}nbucher, P.J. (eds.) Advances in Finance and Stochastics: Essays in Honour of Dieter Sondermann, pp. 1-37. Springer, Berlin (2002)

\bibitem{D:09}
Delbaen, F.: Risk measures for non-integrable random
variables. Math. Finance {\bf 19}, 329--333 (2009)

\bibitem{DO:16}
Delbaen, F., Owari, K.: On convex functions on the duals of $\Delta_2$-Orlicz
spaces. Preprint: arXiv:1611.06218 (2016)

\bibitem{DK:13}
Drapeau, S., Kupper, M: Risk preferences and their robust
representations. Math. Oper. Res. {\bf 38}, 28--62 (2013)

\bibitem{ES:02}
Edgar, G.A., Sucheston, L.: Stopping Times and Directed
Processes. Cambridge University Press, Cambridge (1992)

\bibitem{FKM:13}
Farkas, W., Koch-Medina, P., Munari, C: Beyond cash-additive risk measures: when
changing the numeraire fails. Finance Stoch. {\bf 18}, 145--173 (2014)

\bibitem{FS:12}
Filipovi\'{c}, D., Svindland, G.: The canonical model space for law-invariant
convex risk measures is $L^1$. Math. Finance {\bf 22}, 585--589 (2012)

\bibitem{FS:04}
F\"{o}llmer, H., Schied, A.: Stochastic Finance: An introduction in Discrete Time. 3rd edn. De Gruyter, Berlin (2011)

\bibitem{Fri:02}
Frittelli, M., Rosazza Gianin, E.: Putting order in risk measures, J. Bank. Financ. {\bf 26}, 1473--1486 (2002)

\bibitem{FR:05}
Frittelli, M., Rosazza Gianin, E.: Law invariant convex risk measures. Adv. Math. Econ. {\bf 7}, 33--46 (2005)

\bibitem{GLX:16}
Gao, N., Leung, D., Xanthos, F.: Closedness of convex sets in Orlicz spaces with applications to dual representation of risk measures. Preprint: arXiv:1610.08806 (2016)

\bibitem{GX:14}
Gao, N., Xanthos, F.: Unbounded order convergence and application to martingales
without probability. J. Math. Anal. Appl. {\bf 415}, 931--947 (2014)

\bibitem{GX:16}
Gao, N., Xanthos, F.: On the C-property and $w^*$-representations of risk
measures. Math. Finance (forthcoming). Preprint: arXiv:1511.03159 (2016)

\bibitem{JST:06}
Jouini, E., Schachermayer, W., Touzi, N.: Law invariant risk measures have the
Fatou Property. Adv. Math. Econ. {\bf 9}, 49--71 (2006)

\bibitem{KR:09}
Kaina, M., R\"{u}schendorf, L.: On convex risk measures on
$L^p$-spaces. Math. Methods of Oper. Res. {\bf 69}, 475--495 (2009)

\bibitem{KM:14}
Koch-Medina, P., Munari, C.: Law-invariant risk measures: Extension properties and
qualitative robustness. Stat. Risk Model. {\bf 31}, 1--22 (2014)

%\bibitem{K:05}
%Kr\"{a}tschmer, V: Robust representation of convex risk measures by probability measure. Finance Stoch. {\bf 9}, 597--608 (2005)
%On $\sigma$-additive robust representation of convex risk measures for unbounded financial
%positions in the presence of uncertainty about the market model, SFB 649 discussion paper 2007-010,
%downloadable at http://sfb649.wiwi.hu-berlin.de.

\bibitem{KSZ:14}
Kr\"{a}tschmer, V., Schied, A., Z\"{a}hle, H.: Comparative and qualitative
robustness for law-invariant risk measures. Finance Stoch. {\bf 18}, 271--295 (2014)

\bibitem{K:01}
Kusuoka, S.: On law-invariant coherent risk measures. Adv. Math. Econ. {\bf 3}, 83--95 (2001)

\bibitem{M:15}
Munari, C.: Measuring risk beyond the cash-additive paradigm. PhD Dissertation, ETH Zurich (2015)

\bibitem{ORG:12}
Orihuela, J., Ruiz Gal\'{a}n, M.: Lebesgue property for convex risk measures on Orlicz spaces. Math. Financ. Econ. {\bf 6}, 15--35 (2012)

%\bibitem{O:14}
%Owari, K.: On the Lebesgue property of monotone convex functions. Math. Financ. Econ. {\bf 8}, 159--167 (2014)

\bibitem{O:14}
Owari, K.: Maximal Lebesgue extension of monotone convex functions. J. Funct. Anal. {\bf 266}, 3572--3611 (2014)

\bibitem{Sh:13} Shapiro, A.: On Kusuoka representation of law invariant risk measures. Math. Oper. Res. {\bf 38}, 142--152 (2013)

\bibitem{S:08} Svindland, G.: Convex risk measures beyond bounded risks.
PhD Dissertation, LMU Munich (2008)

\bibitem{S:10}
Svindland, G.: Continuity properties of law-invariant (quasi-)convex risk
functions on $L^\infty$. Math. Financ. Econ. {\bf 3}, 39--43 (2010)

%\bibitem{RS:06}
%Ruszczynski, A., Shapiro, A., Optimization of convex risk functions, \emph{Mathematics of Operations Research} 31, 2006, 433 -452.

\end{thebibliography}
\end{document}